\newcommand{\believes}{\mid\!\equiv}
\newcommand{\sees}{\triangleleft}
\newcommand{\oncesaid}{\mid\!\sim}
\newcommand{\controls}{\mid\!\Rightarrow}
\newcommand{\fresh}[1]{\#(#1)}
\newcommand{\encrypt}[2]{{ \{ #1 \} }_{#2}}
\newcommand{\sharekey}[1]{\xleftrightarrow{#1}}
\newcommand{\secret}[1]{\xleftrightharpoons{#1}}
\newtheorem{theorem}{Theorem}[section]
\newenvironment{proof}[1][Proof]{\begin{trivlist}
\item[\hskip \labelsep {\bfseries #1}]}{\end{trivlist}}
\newcommand{\qed}{\nobreak \ifvmode \relax \else
      \ifdim\lastskip<1.5em \hskip-\lastskip
      \hskip1.5em plus0em minus0.5em \fi \nobreak
      \vrule height0.75em width0.5em depth0.25em\fi}
\newlist{steps}{enumerate}{1}
\setlist[steps, 1]{listparindent=1.5em ,align=left, leftmargin= 5 pt,label = Step \arabic*:}
\begin{document}

\title{BEPHAP: A Blockchain-Based Efficient Privacy-Preserving Handover Authentication Protocol with Key Agreement for Internet of Vehicles}

\author{Xianwang Xie,~\IEEEmembership{student member,~IEEE,}, Bin Wu,~\IEEEmembership{member,~IEEE,}, Botao Hou,~\IEEEmembership{student member,~IEEE,}

\thanks{Manuscript received XXX xx, xxxx; revised XXX xxx, xxxx. The work was supported in part by the National Natural Science Foundation of China under Grant U1936119, and Grant 62272007 and in part by the Major Science and Technology Project of Hainan Province under Grant ZDKJ2019003, and in part by the Key Projects of Science and Technology Of China State Railway Group Co.,Ltd under Grant N2021W003. \textit{(Corresponding author: Bin Wu.)}}

\thanks{Xianwang Xie is with the State Key Laboratory of Information Security, Institute of Information Engineering, CAS, Beijing, China, and also with the School of Cyber Security, University of Chinese Academy of Sciences, Beijing, China.}

\thanks{Bin Wu is with the State Key Laboratory of Information Security, Institute of Information Engineering, CAS, Beijing, China, and also with the School of Cyber Security, University of Chinese Academy of Sciences, Beijing, China (e-mail: wubin@iie.ac.cn).}

\thanks{Botao Hou is with the State Key Laboratory of Information Security, Institute of Information Engineering, CAS, Beijing, China, and also with the School of Cyber Security, University of Chinese Academy of Sciences, Beijing, China.}

}

\markboth{\LaTeX\ Class Files,~Vol.~XX, No.~XX, XXX~202X}%
{Shell \MakeLowercase{\textit{et al.}}: A Sample Article Using IEEEtran.cls for IEEE Journals}


\maketitle

\begin{abstract}
The Internet of Vehicles (IoV) can significantly improve transportation efficiency and ensure traffic safety. Authentication is regarded as the fundamental defense line against attacks in IoV. However, the state-of-the-art approaches suffer from several drawbacks, including bottlenecks of the single cloud server model, high computational overhead of operations, excessive trust in cloud servers and roadside units (RSUs), and leakage of vehicle trajectory privacy.
In this paper, BEPHAP, a Blockchain-based Efficient Privacy-preserving Handover Authentication Protocol with key agreement for internet of vehicles, is introduced to address these problems. BEPHAP achieves anonymous cross-domain mutual handover authentication with key agreement based on the tamper-proof blockchain, symmetric cryptography, and the chameleon hash function under a security model that cloud servers and RSUs may launch attacks. 
BEPHAP is particularly well suited for IoV since it allows vehicles only need to perform lightweight cryptographic operations during the authentication phase.
BEPHAP also achieves data confidentiality, unlinkability, traceability, non-repudiation, non-frameability, and key escrow freeness. Formal verification based on ProVerif and formal security proofs based on the BAN logic indicates that BEPHAP is resistant to various typical attacks, such as man-in-the-middle attacks, impersonation attacks, and replay attacks. Performance analysis demonstrates that BEPHAP surpasses existing works in both computation and communication efficiencies. And the message loss rate remains 0 at 5000 requests per second, which meets the requirement of IoV.

\end{abstract}

\begin{IEEEkeywords}
internet of vehicles, handover authentication, blockchain, privacy-preserving, BAN logic, ProVerif.
\end{IEEEkeywords}

\section{Introduction}
Internet of Vehicles (IoV), as an essential part of the Intelligent Transportation Systems (ITS), can significantly improve the efficiency of transportation and reduce traffic accidents and energy consumption \cite{DBLP:journals/fgcs/LiangLWCLZ19, DBLP:journals/jsa/BaggaSDV21,DBLP:journals/tits/YangWYHS22}.
IoV consists of participating vehicles with On-board Units (OBUs), roadside units (RSUs), and cloud servers. The roadside units are infrastructures deployed along the road, which can be used as edge servers to interact with the vehicles, the cloud servers are responsible for providing services to vehicles, and the communication between the vehicles and the cloud servers is realized through the roadside units. In such a network, vehicles can exchange real-time traffic information with other entities \cite{DBLP:journals/tvt/HuangYC11}, such as location, speed, traffic congestion, etc. To effectively and securely support information dissemination, reliable authentication schemes are indispensable. In an authentication scheme, two parties, which can be seen as a vehicle and RSU, can confirm whether the other party is legitimate with several message exchanges in an insecure communication channel \cite{DBLP:journals/jpdc/XuLLXJ21}.
However, several challenges must be handled in IoV:

1) User privacy may be obtained by adversaries who perform privacy mining and data association in the massive real-time messages dissemination of IoV. A natural idea to address privacy leakage is adopting a pseudonyms mechanism to protect the real identity\cite{DBLP:journals/tvt/LiuWC15}. However, multiple pseudonyms for the same vehicle may be linked and associated by powerful adversaries by monitoring spatiotemporal relationships\cite{DBLP:conf/globecom/HaoCR08}, and the vehicle's trajectory may leak since the static road topology restricts the movement of the vehicle. While, vehicle owners usually do not want their private information (such as their real identities and driving trajectory) to be revealed. Therefore, the privacy of the vehicle (owner) should be protected by authentication protocols. However, vehicles with malicious behavior should be identified and punished, so privacy protection should be conditional \cite{7840036}.

2) The openness of IoV makes it vulnerable to various security threats, such as typical replay attacks, impersonation attacks, eavesdropping attacks, tampering attacks, and man-in-the-middle attacks. Many IoV authentication schemes have been proposed to deal with these security threats. However, most have not been verified by formal security verification tools. It's worth noting that most state-of-the-art approaches assume cloud servers and RSUs are trusted entities. While in reality, cloud servers and RSUs are taken care of by different parties. They may be curious about user privacy and act as a passive attacker to cause the leakage of confidential information. And they may also launch an active attack to frame honest vehicles \cite{DBLP:journals/tifs/FengSXW21} while tracing malicious ones.

3) The communication time among RSUs and vehicles is restricted due to the vehicles' high speed (e.g., 36–140 km/h) and short communication range (e.g., 100–300 m)\cite{DBLP:journals/tits/JiangZW16}. Moreover, an RSU should verify approximately 5000 messages per second with hundreds of vehicles in its coverage because the transmission frequency of the traffic-related messages can exceed 10 times per second \cite{DBLP:journals/tifs/FengSXW21}. Thus, a low-latency authentication protocol is a must in IOV.

4) In most existing IoV solutions, a centralized structure is adopted, which means all vehicles can only authenticate with the cloud server, and the RSU just acts as an intermediary node to facilitate communication between the vehicle and the cloud server. For such a centralized architecture, as the number of vehicles increases, the computing and communication resource bottleneck of the central server may make it fail to accomplish mutual authentication with all vehicles in the network within a limited time. Therefore, a multi-cloud network model should be used in IoV \cite{DBLP:journals/jpdc/XuLLXJ21}. However, due to vehicles' long-distance mobility, vehicles need to be capable of performing cross-domain authentication under the multi-cloud network model \cite{DBLP:journals/jnca/LiXMW12}.




In this paper, BEPHAP, a Blockchain-based Efficient Privacy-preserving Handover Authentication Protocol with key agreement for IoV, is proposed to solve the above problems. The blockchain, as a distributed peer-to-peer network, is suitable for addressing cross-domain authentication problems in multi-cloud model. In BEPHAP, the blockchain is used to synchronize the vehicle-related information in each cloud server and enable them to manage the vehicles' information in the network jointly. Due to the tamper-proof property of the blockchain, any attacker, including cloud servers and RSUs, cannot easily tamper with the vehicle-related information stored in the blockchain.

The main contributions of this paper are as follows:
\begin{enumerate}
    \item To the best of our knowledge, BEPHAP is the first blockchain-based authentication protocol scheme for IoV that simultaneously implement mutual authentication with key agreement, data confidentiality, identity anonymity, unlinkability, traceability, non-repudiation, non-frameability, key escrow freeness \cite{DBLP:journals/tifs/YangZZCZ22}, cross-domain, formal security proof, and verification by formal security verification tools.
    \item A novel low-latency authentication scheme is proposed, suitable for IoV scenarios with limited vehicle computing resources. Because BEPHAP allows vehicles only need to perform lightweight cryptographic operations in the authentication phase, such as hash and symmetric encryption.
    \item We consider a security model in which cloud servers and RSUs may launch attacks. Under the security model, the security evaluation shows that BEPHAP can provide conditional privacy-preserving and prevent honest vehicles from being framed by any entities, including cloud servers or RSU.
    \item It is proved that our protocol possesses various security properties, based on BAN logic \cite{DBLP:conf/sosp/BurrowsAN89} and ProVerif \cite{blanchet2018proverif} formal security verification tools.
    
\end{enumerate}

The remainder of this paper is organized as follows. Section \ref{Related Work} introduces related works of the proposed research. Section \ref{Preliminaries and System Overview} introduces the preliminaries, system model, and secure model. We elaborate on the proposed scheme in Section \ref{Proposed Scheme} and present the security evaluation in Section \ref{Security Evaluation}. In Section \ref{Functionality and Performance Evaluation}, we evaluate the functionality and performance of our model and compare it with the existing schemes. Finally, Section \ref{Conclusion} concludes the paper.

\section{Related Work\label{Related Work}}

In 2005, Choi et al. \cite{DBLP:conf/mswim/ChoiJW05} combined symmetric authentication with short-term pseudonyms in IoV. Since symmetric cryptography has higher computational efficiency and lower communication overhead, authentication efficiency is improved. In this scheme, vehicles generate short-term pseudonyms from unique identifiers and seed values received from authorities. The vehicle and the RSU share a secret key, so the RSU can verify that the vehicle's identity is legitimate by verifying that the vehicle has the secret key. However, this scheme has the problems of vulnerable key management and lack of non-repudiation.

To protect the identity privacy of the vehicle, Raya et al. \cite{raya2007securing} proposed a PKI-based privacy protection authentication scheme for IoV. In this scheme, certificates are issued and managed by a certificate authority. The scheme protects the identity privacy of the vehicle through a pseudonymous certification. However, the vehicle needs to continuously maintain a certificate revocation list and pre-install a large number of public-private key pairs and corresponding certificates, which will cause a lot of computational overhead and storage burden to the vehicle. Lu et al. \cite{DBLP:conf/infocom/LuLZHS08} proposed a protocol that allows vehicles to request a temporary certificate from the RSU, thereby solving the problem of certificate pre-storage. However, it involves many bilinear pairing operations causing high computational costs. Lu et al. \cite{DBLP:journals/tvt/LuLLLS12} proposed a pseudonym update strategy to prevent being tracked by limiting the pseudonym's lifetime. When vehicles gather in parking lots or road intersections, if the anonymity set size reaches a threshold, the vehicles can change pseudonyms at the same time. But this pseudonym-updating strategy does not perform well in scenarios with low vehicle density. Wang et al. \cite{DBLP:journals/comcom/WangY17} proposed an anonymous authentication scheme. The certificates of the vehicle and the RSU are distributed by a trusted organization, and the RSU authenticates the vehicle based on the long-term certificate of the vehicle and assigns a master key to the certified vehicle. The vehicle can generate a pseudonym by itself through the master key, thus reducing the load of the trusted authority. However, this scheme does not satisfy the unlinkability and cannot protect the trajectory privacy of the vehicle.

An authentication scheme based on IBS (Identity-Based Signature), proposed by Shamir et al. \cite{DBLP:conf/crypto/Shamir84}, can be used to solve the general shortcomings of PKI-based authentication schemes, that is, the computational, communication, and storage overheads caused by certificates and revocation lists. 
The private key of the vehicle is generated by the Key Generation Center (KGC) according to the vehicle's identity, and the vehicle's identity information is used as the public key. Zhang et al. \cite{DBLP:conf/infocom/ZhangLLHS08,DBLP:journals/winet/ZhangHT11} implemented lightweight message authentication and privacy protection based on IBS in the scenario of IoV. However, Lee et al. \cite{DBLP:journals/winet/LeeL13} pointed out that the scheme of Zhang et al. \cite{DBLP:conf/infocom/ZhangLLHS08,DBLP:journals/winet/ZhangHT11} does not achieve non-repudiation and is vulnerable to replay attacks. And by adding message signature, message verification, and anonymous identity generation, an improved scheme is proposed, which can achieve efficient batch authentication and solve the problems of Zhang et al. \cite{DBLP:conf/infocom/ZhangLLHS08,DBLP:journals/winet/ZhangHT11}. But an increase in the number of invalid signatures may degrade the performance of this scheme. What's more, Bayat et al. \cite{DBLP:journals/winet/BayatBRA15} pointed out that the scheme of Lee et al. \cite{DBLP:journals/winet/LeeL13} is vulnerable to impersonation attacks. An attacker can imitate a legitimate vehicle to generate a valid signature, thereby sending false messages. Shim et al. \cite{DBLP:journals/tvt/Shim12} proposed a conditional privacy-preserving authentication scheme based on bilinear pairings, but pairing operations make the scheme computationally expensive. And based on the IBS scheme, the vehicle's private key is generated by the key generation center. In other words, the key generation center knows the private key of each vehicle. That is, there is a key escrow problem.

To solve the certificate management problem in the PKI-based scheme and the key escrow problem in the IBS-based scheme, Tsai et al. \cite{DBLP:journals/sj/Tsai17} proposed a certificateless scheme. But due to the use of bilinear pairing operations, the computational overhead is high. Ming and Cheng \cite{ming2019efficient} proposed a conditional privacy-preserving authentication scheme with low transmission overhead. The scheme can be proved to be secure under the random oracle model. However, its transmission overhead is still too high to meet the requirement of IoV. Yang et al. \cite{DBLP:journals/tifs/YangZZCZ22} designed a certificateless aggregation signcryption scheme. However, since it is also implemented using bilinear pairing, the computational overhead is still too high. Additionally, it does not achieve mutual authentication.

Blockchain, as an emerging technology in recent years, has the characteristics of distribution, non-tampering, and traceability. 
Arora et al. \cite{arora2018block} proposed a blockchain-based authentication protocol for the IoV. This scheme uses RSU as part of the blockchain, which is obviously inappropriate because RSU has limited storage resources and will face more security risks. 
Wang et al. \cite{DBLP:journals/jsa/WangHXLMZ21} proposed a blockchain-assisted handover authentication and key agreement scheme in a multi-server edge computing environment, but this scheme is not oriented to the scenario of the IoV, and it is not unframeable. Feng et al. \cite{DBLP:journals/jsa/FengSXL21} proposed an efficient privacy-preserving authentication model (EPAM) for the IoV, which uses asynchronous accumulators to extend the blockchain. However, it is not unframeable because the vehicle's certificate can be placed in other messages by RSM. Xu et al. \cite{DBLP:journals/jpdc/XuLLXJ21} proposed an efficient authentication protocol for the IoV based on blockchain and symmetric encryption. The authentication process only has low-overhead operations such as hashing and XOR, and the protocol transfers the computational load of the authentication server to the RSU, thereby improving the authentication efficiency. But this scheme cannot protect the privacy of the vehicle. Zhang et al. \cite{DBLP:journals/tdsc/0002DB021} proposed a robust, general handover authentication protocol for 5G environments. Using the consensus and anti-tampering capabilities of the blockchain, performing handover authentication between heterogeneous access networks in different domains is realized. However, in this scheme, although the pseudo-identifiers in each message are different, the attacker can still calculate the chameleon hash according to the messages. By comparing the chameleon hashes, different messages can be linked to the same device and then infer the location privacy of the device based on trajectory mobility.

\section{Preliminaries and System Overview\label{Preliminaries and System Overview}}
\subsection{Elliptic Curve Cryptosystem\label{Elliptic Curve Cryptosystem}}
Let $p$ be a prime number, the finite field $\mathbb{F}_p$ is determined by the prime number $p$. Let $E(\mathbb{F}_p)$ be an elliptic curve over the finite field $\mathbb{F}_p$. Let $a, b \in \mathbb{F}_p$. Define an  elliptic curve $E(\mathbb{F}_p):y^2=x^3+ax+b$ mod $p$. Let $O$ be the point at infinity, $P$ a point of $E(\mathbb{F}_p)$ with prime order $q$, $\mathbb{G}$ an additive elliptic curve group consisting of $O$ and other points on $E(\mathbb{F}_p)$ with generator $P$. The elliptic curve group $\mathbb{G}$ has the following hardness assumptions and properties\cite{DBLP:journals/joc/MenezesV93}.

\begin{itemize}

\item Additive operation: Let $P$ and $Q$ be two points of the additive elliptic curve group $\mathbb{G}$. We can get $R=P+Q$. If $P=Q$, then $R=2P$. If $P=-Q$, then $R=O$, If $P\neq Q$ and $P\neq -Q$, then R is the intersection of $E(\mathbb{F}_p)$ and the straight line connecting P and Q.

\item Scalar point multiplication: Let $m \in \mathbb{Z}_q^*, P \in \mathbb{G}$, the scalar multiplication of $E(\mathbb{F}_p)$ is defined as $m \cdot P=P+P+...+P$.

\item Elliptic curve discrete logarithm problem (ECDLP): For a probabilistic polynomial-time (PPT) adversary, it is computationally hard to calculate $x \in \mathbb{Z}_q^*$ in the case of known two points $P$, $Q=xP \in \mathbb{G}$.


\end{itemize}

\subsection{Chameleon Hash Function\label{Chameleon Hash Function}}
Chameleon hash function\cite{DBLP:conf/ndss/KrawczykR00}, also called trapdoor-hash function, is hash function featuring a trapdoor. The knowledge of the trapdoor allows one to find arbitrary collisions in the domain of the function. Let $(m^*, r^*)$ be an initial input where $m^*,r^* \in \mathbb{Z}_q^*$, $(k,x)$ the trapdoor satisfying $x\in \mathbb{Z}_q^*$ and $k=m^*+r^*x$, $(P,Y)$ the hash key where $P$ is a point of $E(\mathbb{F}_p)$ with prime order $q$, $Y=xP$. Then the ECC variant of the chameleon hash function is defined as $CH_Y(m,r)=mP+rY$ where $m,r \in \mathbb{Z}_q^*$, $m=k-rx$ mode $q$. The following properties are owned by the chameleon hash function:

\begin{itemize}

\item Collision Resistance: It's infeasible for any probabilistic polynomial-time (PPT) adversary except the holder of the trapdoor to compute $m',r' \in \mathbb{Z}_q^*$ such that $(m,r) \neq (m',r')$ satisfying $CH_Y(m,r)=CH_Y(m',r')$.

\item Trapdoor Collisions: Given an input $r' \in \mathbb{Z}_q^*$, the holder of the trapdoor can easily calculate $m'=k-r'x$ mod $q$ such that $CH_Y(m',r')=CH_Y(m^*,r^*)$, where $(m^*,r^*)$ is the initial input.

\end{itemize}

\subsection{Pseudo-Random Function\label{pseudo-random function}}
A pseudo-random function (PRF) is an efficient and deterministic algorithm taking two inputs and returning a pseudorandom output sequence. For example, take a key $k$ and a binary string $x \in \mathcal{X}$ as input, where $\mathcal{X}$ denotes the input space, we can get a binary string $y=PRF(k,x) \in \mathcal{Y}$, where $\mathcal{Y}$ stands for the output space. A PRF is secure if any PPT adversary cannot distinguish the output of the secure PRF from that of random function\cite{DBLP:conf/crypto/GoldreichMW86}.

\subsection{System Model}

Our system architecture is illustrated in \ref{Scenes schematic}, in which there exist six components as explained below:

\begin{itemize}
    \item Law Enforcement Authority (LEA): LEA should be an institution authorized by law that can trace a malicious vehicle's real identity for auditing the network. LEA is the only authority with the ability to reveal the real identity of a hostile vehicle. LEA can be seen as a cloud server with considerable computing and storage resources that can deploy a full blockchain node, enabling it to send registration information to the blockchain. The LEA is responsible for managing the vehicles' real identity and registration information.
    
    \item Regional Service Manager (RSM): RSMs, as cloud servers with strong computing and storage capabilities, are delegated by the LEA to provide registration, authentication, and revocation of vehicles. The entire network consists of multiple domains, and each RSM is responsible for serving vehicles in one domain in the whole network. Each RSM acts as a full node in the blockchain network storing registration, authentication, and revocation information of vehicles. In addition, to punish or track malicious vehicles, RSM is obliged to report the trajectory of malicious vehicles in its domain to LEA. Thus, RSM needs the ability to link multiple messages of malicious vehicles.
    
    \item Roadside Unit (RSU): RSUs are widely distributed alongside the roads to optimally organize and coordinate vehicular communications. The calculation and storage capability of an RSU are weaker than an RSM. While considering the development of hardware, the RSU still has considerable computing and storing resources to undertake part of the load during the authentication process, thereby reducing the overhead of the RSM. Besides, to punish or track hostile vehicles, RSU is obliged to report the trajectory of malicious vehicles in its service area to RSM. Hence, RSU needs the capability to link multiple messages of the hostile vehicle.
    
    \item On-board Unit (OBU): OBU is the communication and computing unit deployed on a vehicle with limited computing and storage capacity. The vehicle with OBU can communicate with infrastructure or other vehicles with the help of surrounding RSUs.
    
    \item Fog Sever (FS): FS, whose computing and storing resources are less than RSM, is responsible for forwarding messages between RSM and RSU. Vehicles usually move in a wide range, possibly involving many different regions. In reality, different regions usually differ in the deployment of authentication servers, the type of access network, and the level of access points. For example, in some regions, there are several fog servers between RSM and RSU, while in others there is no fog server.
    
    \item Blockchain: The blockchain is responsible for synchronizing the registration list and revocation list. Each RSM can provide registration and authentication services for vehicles or revoke a malicious vehicle in a distributed manner with the help of blockchain. Registration and revocation information is stored on the blockchain and shared by LEA and all RSMs. Different domains can share registration and revocation information through the blockchain, although there are differences in network type and deployment level. Once a vehicle is registered with an RSM, it can verify identity in all RSMs' responsible areas. Similarly, when a vehicle with malicious behavior is discovered and revoked by an RSM, the vehicle cannot be authenticated successfully in all RSMs' responsible areas. With the help of blockchain, we can achieve distributed cross-domain authentication. In addition, we can use blockchain to prevent honest vehicles from being framed by compromised LEA in the process of pursuing accountability.
    
\end{itemize}

\subsection{Secure Model}
\subsubsection{Secure Assumption}

Regarding LEA, RSM, RSU, and FS, for their reputation, they will not take the initiative to disrupt the protocol process and cause authentication and key negotiation to fail. Vehicles also won't do it for their benefit. They also won't leak the initial private information, such as the real identity of vehicles owned by the LEA for revealing malicious vehicles' identities and the driving traces of vehicles that the LEA, RSU, and RSMs can obtain for tracking hostile vehicles. Only the LEA in the model knows the real identity of the vehicles.


Any entity in the model must strictly protect its own private key.
LEA, RSM, RSU, and FS communicate through wired connection due to fixed geographical location, so it is assumed that the communication between LEA, RSM, RSU is secure.

\subsubsection{Threat Model\label{Threat Model}}

Generally, there are external attackers and internal attackers,
the former refers to the entities not directly involved in IoV, and the latter relates to entities directly engaged in IoV, i.e., vehicles, RSUs, RSMs, and LEA.

External adversaries can launch passive and active attacks. Regarding passive attacks, an external adversary may act as a passive listener keeping monitoring public communication channels, trying to get some confidential information, e.g., the real identity of vehicles, the driving traces of vehicles, and the plaintexts of encrypted messages. Regarding active attacks, as defined in the Dolev-Yao threat model \cite{DBLP:journals/tit/DolevY83}, an external adversary can read, intercept, fabricate, modify, and replay the transmitted data packets over the channel.

There is a significant difference between internal adversaries and external adversaries. That is, an internal attacker will honestly execute the authentication and key agreement protocol without actively leaking the initial privacy. However, they may still launch passive and active attacks. As for passive attacks, an internal adversary may perform as a passive listener monitoring public or secure communication channels, trying to get confidential information other than the initial confidential information it should possess. For instance, some RSM, RSU, and FS may be curious about vehicles' real identities. Concerning active attacks, an internal attacker may be bribed or want to collude with malicious vehicles for profit, so they may frame honest vehicles that conflict with their interests or cover malicious vehicles. Precisely, an internal adversary may fabricate, modify, and replay the authentication information of an honest vehicle to send malicious messages or output the honest vehicle's identity in the process of tracing the hostile vehicle's real identity, thereby framing the honest or covering the hostile vehicle.


\subsubsection{Secure Requirements}
BEPHAP should satisfy the following security requirements.
\begin{itemize}

\item Mutual authentication correctness and integrity: For the correctness property, both authorized parties, such as the authorized vehicle and RSU, communicating with BEPHAP can be verified that they are indeed legitimate entities. The message sent by an accredited vehicle or RSU can be proved correct without being modified or fabricated.

\item Data Confidentiality: Confidentiality guarantees that only the legitimate receivers can learn the content of the message sent by a vehicle that may contain sensitive information arousing the curiosity of attackers. External adversaries can't get any confidential information from the message and no internal attacker can obtain confidential information other than the initial private information.

\item Conditional privacy-preserving: BEPHAP meets the conditional privacy-preserving requirements as follows:
\begin{itemize}
\item Identity anonymity: no one except for LEA can reveal the real identity of a vehicle according to the received or intercepted message sent by the vehicle.

\item Unlinkability: No one can link multiple messages to the same vehicle, except for entities that need this information initially, such as LEA, RSM, and RSU.
\end{itemize}

\item Traceability: The real identity of the message sender should be bound to the message so that LEA can trace the vehicle sending malicious messages.

\item Non-repudiation: Only the LEA can identify the vehicle related to the verification message. It can reveal the real identity of a vehicle in the event of a dispute. Only the vehicle itself can send the associated authentication information. Therefore, when the real identity of a hostile vehicle is traced through the authentication information, the hostile vehicle cannot deny its malicious behavior.

\item Non-frameability: No external or internal attacker can frame an honest vehicle, even the LEA.

\item Resisting attacks: Our proposed scheme can withstand the typical attacks launched by external adversaries, for instance, the replay attack, man-in-the-middle attack, the impersonation attack, and the modification attack.

\item Key escrow freeness: Except for the vehicle itself, no one (even the LEA) can learn the entity's private key. 

\end{itemize}

Since communication in IoV has to fulfill the real-time demand, 
in addition to meeting the above security requirements, the proposed scheme has low computation and communication overheads.

\begin{figure}[!t]
\centering
\includegraphics[width=3.2in]{"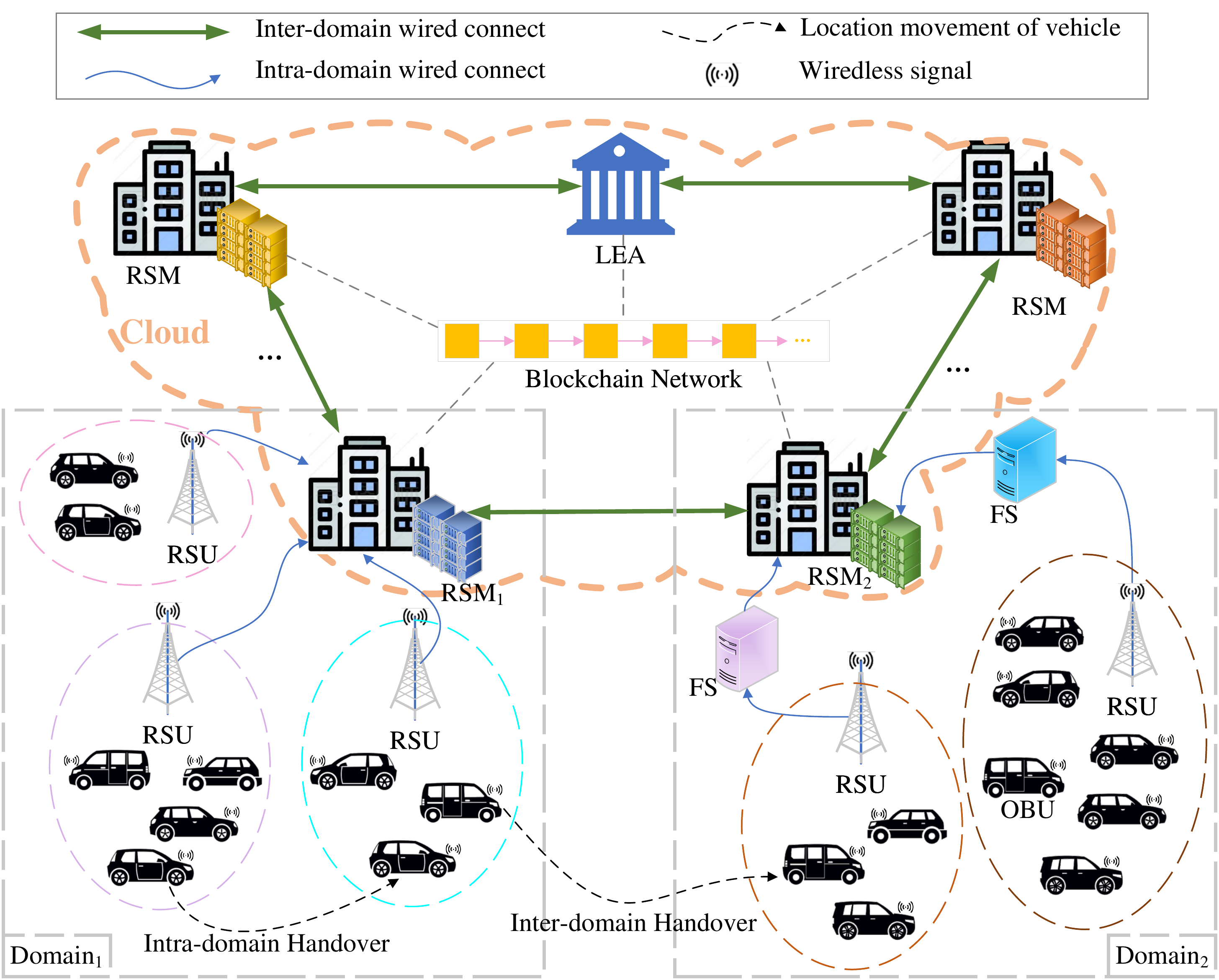"}
\caption{System Architecture.}
\label{Scenes schematic}
\end{figure}

\section{Proposed Scheme\label{Proposed Scheme}}
Necessary notations are summarized in Table \ref{notations} for ease of reference. BEPHAP consists of four phases: system initialization, network registration, handover authentication, and revocation.

\subsection{System Initialization}

This phase initializes the system parameters for each registered network member can be mathematically modeled as follows.

\subsubsection{LEA's Initialization}

\begin{itemize}

\item Let $t$ be a prime number, the finite field $\mathbb{F}_t$ is determined by the prime number $p$. Let $E(\mathbb{F}_t)$ be an elliptic curve over the finite field $\mathbb{F}_t$ and $P$ a point of $E(\mathbb{F}_t)$ with prime order $q$, $\mathbb{G}$ an additive elliptic curve group generated by $P$. Let $\lambda$ be a security parameter. The system initialization process is as follows:

\item The LEA choose hash functions: 


\begin{itemize}

    \item $H_0:\{0,1\}^* \times \mathbb{Z}_q^* \to \mathbb{Z}_q^*$,

    \item $H_1:\{0,1\}^{*} \times \mathbb{Z}_q^* \times \mathbb{Z}_q^* \times \{0,1\}^{*} \rightarrow \{0,1\}^{\lambda}$,
    
    \item $H_2:\{0,1\}^{*} \times \mathbb{Z}_q^* \times \mathbb{G} \times \{0,1\}^* \times \{0,1\}^{\lambda} \times \{0,1\}^* \to \mathbb{Z}_q^*$,
    
    \item $H_3:\mathbb{G} \times \{0,1\}^{\lambda} \times \mathbb{Z}_q^* \times \{0,1\}^* \to \{0,1\}^{\lambda}$,
    
    \item $H_4:\mathbb{Z}_q^* \times \{0,1\}^{\lambda} \times \{0,1\}^* \to \{0,1\}^{\lambda}$,
    
    \item $H_5:\{0,1\}^{\lambda+*} \times \mathbb{Z}_q^* \times \{0,1\}^{3\lambda+*} \to \{0,1\}^{\lambda}$.
    
    \item $H_6:\{0,1\}^{2\lambda+*} \times \mathbb{Z}_q^* \times \mathbb{G} \times \{0,1\}^{2\lambda+*} \to \{0,1\}^{\lambda}$

\end{itemize}

\item The LEA specifies a chameleon hash function denoted by $CH$ to be used by VNs.

\item The LEA generates a signing and verification key pair $(sk_{sig}^{LEA}, pk_{ve}^{LEA})$, and a decryption and encryption key pair $(sk_{de}^{LEA}, pk_{en}^{LEA})$ under an Elliptic Curve Digital Signature Algorithm (ECDSA). The signing and verification key pair can be used to send transactions to the blockchain.

\item Furthermore, the LEA randomly choose group key $GK, b \in \mathbb{Z}_q^*$, and send them to each RSM through secure channel. $GK, b$ need to be updated periodically.

\item Finally, the LEA publishes the system public parameter $para=\{q, P, \mathbb{G}, H_0, H_1, H_2, H_3, H_4, H_5\}$ and $pk_{ve}^{LEA}$, $pk_{en}^{LEA}$, then secretly saves $sk_{sig}^{LEA}, sk_{de}^{LEA}, GK, b$.

\end{itemize}

\subsubsection{RSM's initialization}

$RSM_x$ in domain $DM_y$ is initialized as follows:

\begin{itemize}
    \item The $RSM_x$ preloads with the system parameter $para$ and $pk_{ve}^{LEA}, pk_{en}^{LEA}$.
    
    \item The $RSM_x$ generates a signing and verification key pair $(sk_{sig}^{RSM_x}, pk_{ve}^{RSM_x})$, and a decryption and encryption key pair $(sk_{de}^{RSM_x}, pk_{en}^{RSM_x})$ under the ECDSA. Then the $RSM_x$ sends $\{pk_{ve}^{RSM_x}),pk_{en}^{RSM_x}\}$ to the LEA through the secure channel for registration.


    
    \item $LEA$ sends $GK,b$ to $RSM_x$ over the secure channel.
    
    \item $RSM_x$ receives and reserves $GK,b$.

    \item The RSM broadcasts $pk_{ve}^{RSM_x}, pk_{en}^{RSM_x}$.
    
\end{itemize}

\subsubsection{RSU's initialization}

$RSU_z$ subordinate to $RSM_x$ in domain $DM_y$ is initialized as follows:

\begin{itemize}
    \item The $RSU_z$ preloads with the system parameter $para$ and $pk_{ve}^{LEA}, pk_{en}^{LEA}$.
    
    \item The $RSU_z$ generates a signing and verification key pair $(sk_{sig}^{RSU_z}, pk_{ve}^{RSU_z})$, and a decryption and encryption key pair $(sk_{de}^{RSU_z}, pk_{en}^{RSU_z})$ under the ECDSA. Then the $RSU_z$ sends $\{pk_{ve}^{RSU_z},pk_{en}^{RSU_z}\}$ to the $RSM_x$ through the secure channel for registration.
    
    \item $RSM_x$ sends $GK,b$ to $RSU_z$ over the secure channel.
    
    \item $RSU_z$ receives and reserves $GK,b$
    
    \item The RSU broadcasts $pk_{ve}^{RSU}, pk_{en}^{RSU}$ and its identification $ID_{RSU}$.
    
\end{itemize}

\begin{table}[H]
	\renewcommand{\arraystretch}{1.3}
	\setlength{\abovecaptionskip}{0cm}
        \caption{Notations Used in SCHEME}
	\label{notations}
	\centering
	\begin{tabular}{cp{0.63\columnwidth}}
   \toprule
   Notation & Meaning \\
   \midrule
   $SEN_x(y)$ & Using a symmetric encryption function to encrypt plaintext $y$ with key $x$ \\
   $SDE_x(y)$ & Using a symmetric decryption function to decrypt ciphertext $y$ with key $x$ \\
   $AEN_x(y)$ & Encrypt plaintext $y$ using an asymmetric encryption function with key $x$ \\
   $ADE_x(y)$ & Decrypt ciphertext $y$ using an asymmetric decryption function with key $x$ \\
   $Sign_x(y)$ & Sign message $y$ with key $x$ \\
   $Veri_x(\sigma,y)$ & Verify the signature $\sigma$ with message $y$ and key $x$ \\
   
   \bottomrule
\end{tabular}
\end{table}

\subsection{Network Registration}

\begin{figure}[!t]
\centering
\includegraphics[width=3.4in]{"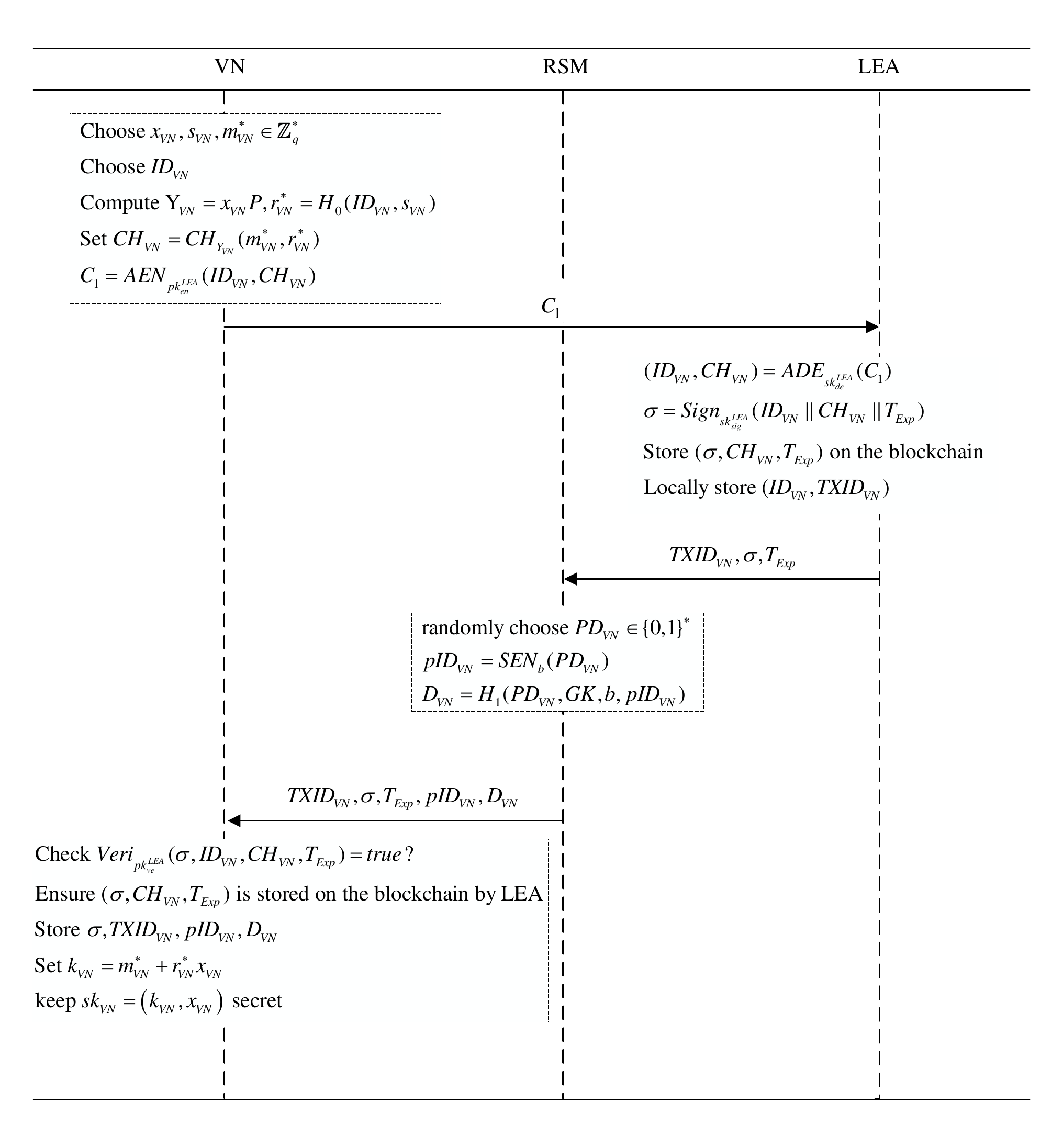"}
\caption{Network Registration.}
\label{registration}
\end{figure}

In order to complete the network registration, as shown in Fig. \ref{registration}, the following procedures are performed among the VN, the $RSM_x$, and the LEA. 

\begin{enumerate}
    \item The VN first chooses $x_{VN}, s_{VN}, m_{VN}^* \in \mathbb{Z}_q^*$ and the real identity $ID_{VN}$. Then the VN computes $Y_{VN}=x_{VN}P$, $r^*_{VN} = H_0(ID_{VN},s_{VN})$ and $CH_{VN}=CH_{Y_{VN}}(m_{VN}^*, r_{VN}^*)$, and sends message $C_1$ to $RSM_x$, where $C_1=AEN_{pk_{en}^{LEA}}(ID_{VN}, CH_{VN})$.
    
    \item  The $RSM_x$ forwards $C_1$ to the LEA. Upon receiving $C_1$, the LEA decrypts $C_1$ through $(ID_{VN},CH_{VN})=ADE_{sk_{de}^{LEA}}(C_1)$ and generates a signature $\sigma=Sign_{sk_{sig}^{LEA}}(ID_{VN}||CH_{VN}||T_{Exp})$, where $T_{Exp}$ is the expiration time of this registration. Then LEA stores $(\sigma,CH_{VN},T_{Exp})$ on the blockchain by sending a transaction. Let $TXID_{VN}$ be the transaction identity. The LEA locally stores $(ID_{VN},TXID_{VN})$ and sends $(TXID_{VN}, \sigma, T_{Exp})$ to the $RSM_x$.
    
    \item After receiving $(TXID_{VN}, \sigma, T_{Exp})$ from the LEA, the $RSM_x$ randomly chooses $PD_{VN} \in \{0,1\}^*$, and computes $pID_{VN}=SEN_b(PD_{VN})$, $D_{VN}=H_1(PD_{VN}, GK, b, pID_{VN})$. Then the $RSM_x$ sends $(TXID_{VN}, \sigma, T_{Exp}, pID_{VN}, D_{VN})$ to the VN.
    
    \item Upon receiving $(TXID_{VN}, \sigma, T_{Exp}, pID_{VN}, D_{VN})$, the VN verifies the signature $\sigma$ by $Veri_{pk_{ve}^{LEA}}(\sigma, ID_{VN}, CH_{VN}, T_{Exp})$ and ensures that $(\sigma, CH_{VN}, T_{Exp})$ is stored on the blockchain by the LEA.
    
    \item The VN locally stores $\sigma, TXID_{VN}, pID_{VN}, D_{VN}$, and keeps $sk_{VN}=(k_{VN},x_{VN})$ secret, where $k_{VN}=m_{VN}^*+r_{VN}^*x_{VN}$.
    
\end{enumerate}

\subsection{Cross-Domain Handover Authentication \label{Cross-Domain_Handover_Authentication}}

\begin{figure}[!t]
\centering
\includegraphics[width=3.4in]{"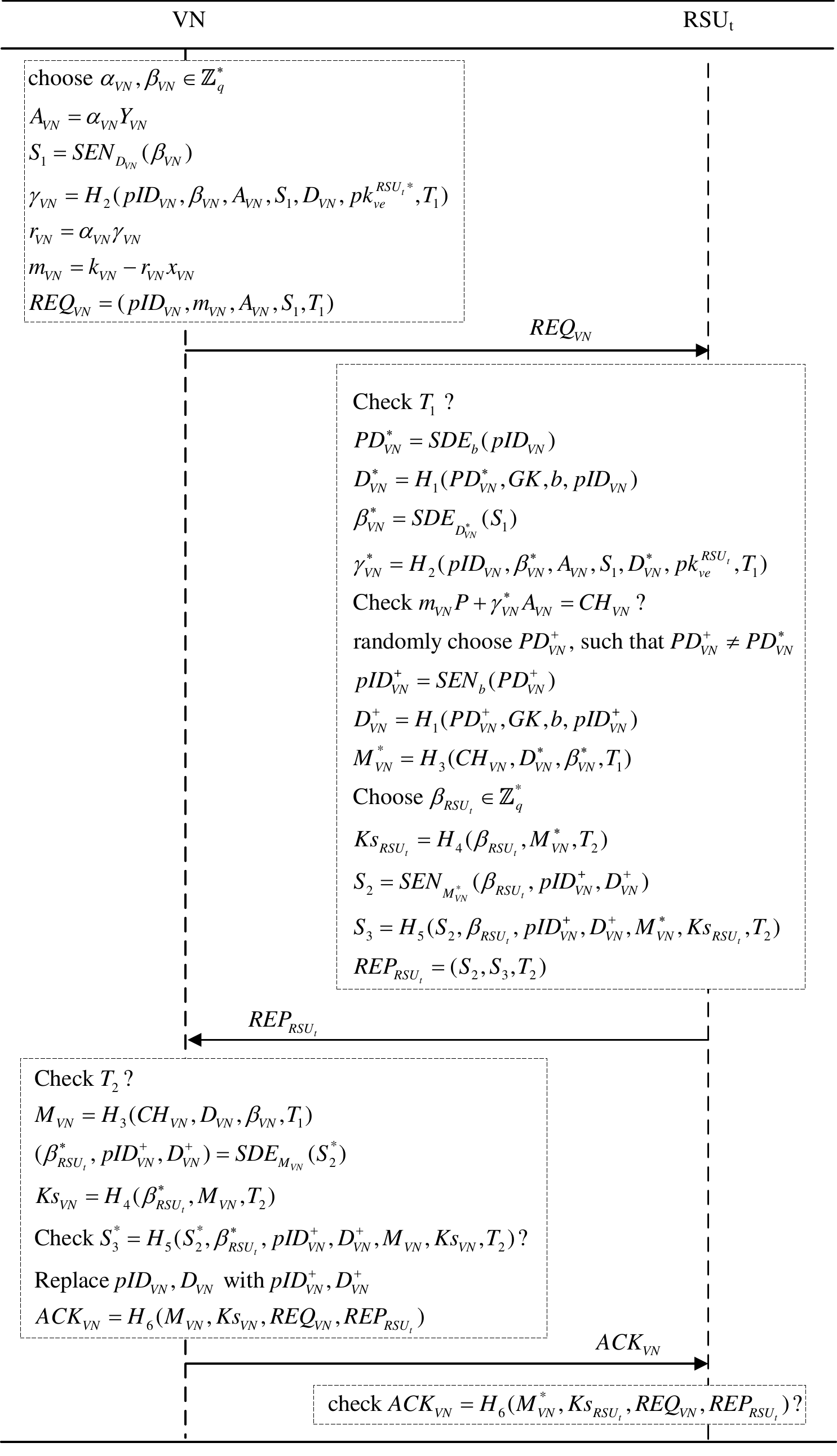"}
\caption{Cross-domain handover authentication and key agreement protocol.}
\label{authentication protocol}
\end{figure}

Referring to Fig. \ref{authentication protocol}, when a VN enters an area covered by $RSU_t$ belonging to $RSM_x$, a mutual authentication with key agreement between the VN and $RSU_t$ proceeds as follows.

\begin{enumerate}
    \item The VN first chooses $\alpha_{VN}, \beta_{VN} \in \mathbb{Z}_q^*$, and computes $A_{VN}=\alpha Y_{VN}$. The VN can complete the calculation of $A_{VN}$ before the authentication phase by pre-computation, so this step won't cause overhead in the authentication phase. Then the VN generates $S_1$, where $S_1=SEN_{D_{VN}}(\beta_{VN})$. To prove the validity of its pseudo-identity $pID_{VN}$, the VN calculate $\gamma_{VN}=H_2(PID_{VN},\beta_{VN},A_{VN},S_1,D_{VN},{pk_{ve}^{RSU_t}}^*,T_1)$, and computes $r_{VN}=\alpha_{VN} \gamma_{VN}$, $m_{VN}=k_{VN}-r_{VN}x_{VN}$, where $T_1$ is a timestamp. Since the RSUs are connected by wire and can communicate with each other, the VN can obtain the public keys of several RSUs along the way through the broadcast of a certain RSU before the authentication stage. Finally, the VN sends $REQ_{VN}=(pID_{VN}$, $m_{VN}$, $A_{VN}$, $S_1$, $T_1)$ to $RSU_t$.
    
    \item After receiving $REQ_{VN}$ from VN, $RSU_t$ check $T_1$'s validity to prevent replay attacks and calculate $PD^*_{VN}=SDE_b(pID_{VN})$, $D_{VN}^*=$ $H_1(PD_{VN}^*$, $GK$, $b$, $pID_{VN})$. Then $RSU_t$ sets $\gamma^*_{VN}=$ $H_2(pID_{VN}$, $\beta^*_{VN}$, $A_{VN}$, $S_1$, $D^*_{VN}$, $pk_{ve}^{RSU_t}$, $T_1)$ and checks the legitimacy of the VN based on Equation \eqref{chameleon}.
    
    \begin{equation}\label{chameleon}
        \ m_{VN}P+\gamma^*_{VN}A_{VN}=CH_{VN}
    \end{equation}
    
    Note that $CH_{Y_{VN}}(m_{VN}, r_{vN})=m_{VN}P+\gamma_{VN}A_{VN}$, where $r_{VN}=\alpha_{VN}\gamma_{VN}$. If Equation \eqref{chameleon} doesn't hold, the VN is invalid and $RSU_t$ quits. Otherwise, $RSU_t$ randomly chooses $PD^+_{VN}$, such that $PD^+_{VN}\neq PD^*_{VN}$, and computes $pID^+_{VN}=SEN_b(PD^+_{VN})$, $D^+_{VN}=H_1(PD^+_{VN}, GK, b, pID^+_{VN})$. $D^+_{VN}$ is the new symmetric key, which is used for symmetric encryption in the next round of cross-domain handover authentication. To make a key agreement with the VN, $RSU_t$ calculates a secret $M^*_{VN}=H_3(CH_{VN}, D^*_{VN}, \beta^*_{VN}, T_1)$ and chooses $\beta_{RSU_t} \in \mathbb{Z}^*_{q}$. The pairwise transient key $Ks_{RSU_t}$ is computed based on Equation \eqref{RSU session key}. 
    
    \begin{equation}\label{RSU session key}
        \ Ks_{RSU_t} = H_4(\beta_{RSU_t}, M^*_{VN}, T_2)
    \end{equation}
    
    Then $RSU_t$ chooses its own timestamp $T_2$ and calculates $S2=SEN_{M^*_{VN}}(\beta_{RSU_t}, pID^+_{VN}, D^+_{VN})$, $S_3=H_5(S_2, \beta_{RSU_t}, pID^+_{VN}, D^+_{VN}, M^*_{VN}, Ks_{RSU_t}, T_2)$. Finally, $RSU_t$ sends $REP_{RSU_t}$ to the VN, where $REP_{RSU_t}=(S_2, S_3, T_2)$.

    \item To thwart replay attacks, the VN first checks the freshness of $T_2$ after receiving $REP_{RSU_t}$ from $RSU_t$ and computes a secret $M_{VN}=H_3(CH_{VN}, D_{VN}, \beta_{VN}, T_1)$. Then the VN calculates $(\beta^*_{RSU_t}, pID^+_{VN}, D^+_{VN})=SDE_{M_{VN}}(S^*_2)$. The pairwise transient key $Ks_{VN}$ is computes based on Equation \eqref{VN session key}.

    \begin{equation}\label{VN session key}
        \ Ks_{VN} = H_4(\beta^*_{RSU_t}, M_{VN}, T_2)
    \end{equation}
    
    Finally, the VN checks the validity of the pairwise transient key agreement based on $S_3^*=H_5(S_2^*, \beta^*_{RSU_t}, pID^+_{VN}, D^+_{VN}, M_{VN}, Ks_{VN}, T_2)$. If successful, the VN replaces $pID_{VN}, D_{VN}$ with $pID^+_{VN}, D^+_{VN}$. then VN sends $ACK_{VN} = H_6(M_{VN}, Ks_{VN}, REQ_{VN}, REP_{RSU_t})$ to the $RSU_t$.

    \item After getting $ACK_{VN}$ from the VN, $RSU_t$ verifies the legitimacy of the pairwise transient key $Ks_{RSU_t}$ according to $ACK_{VN}= H_6(M_{VN}^*, Ks_{RSU_t}, REQ_{VN}, REP_{RSU_t})$
    
\end{enumerate}

The VN and $RSU_t$ can successfully conduct mutual authentication and key agreement through the above process and then communicate via the session key $Ks$.

\subsection{Revocation}
We use the revocation mechanism to punish hostile vehicles. After $RSM_x$ discovers a malicious vehicle $VN_m$, it will commit $CH_{VN_m}$ to the revocation list in the blockchain. Therefore, other RSMs can recognize $VN_m$ is malicious by reading the revocation list. In many related works, RSU prevents malicious vehicles from interacting with entities in the network by checking revocation lists \cite{zhu2013efficient}. While in BEPHAP, we can avoid malicious vehicles from accessing the network by updating the group key $(GK, b)$. The periodic update procedure is as follows:
\begin{itemize}
    \item LEA randomly chooses the $GK^{'}, b^{'} \in \mathbb{Z}_q^*$, and sends $(GK^{'}, b^{'})$ to all the RSMs over the secure channel.
    
    \item RSMs forwards $(GK^{'}, b^{'})$ to RSUs.
    
    \item For each honest vehicle, the RSUs use the new group key $(GK^{'}, b^{'})$ computes the new pseudo-identity $pID$ and new symmetric key $D$ for the honest vehicle. Specifically, for the honest vehicle $VN_h$, the $RSU_t$ randomly choose $PD_{VN_h}^{'}$, and computes $pID_{VN_h}^{'} = SEN_{b^{'}}(PD_{VN_h}^{'})$, $D_{VN_h}^{'}=H_1(PD_{VN_h}^{'}, GK^{'}, b^{'}, pID_{VN_h}^{'})$. Then the RSU sends $S_{upd}=SEN_{Ks_{RSU_t}}(pID_{VN_h}^{'}, D_{VN_h}^{'})$ to the VN.
    
    \item $VN_h$ receives $S_{upd}$, and calculate $SDE_{Ks_{VN_h}}(S_{upd})$ to obtain $pID_{VN_h}^{'}$ and $D_{VN_h}^{'}$.
    
    \item $VN_h$ replace $pID_{VN_h}, D_{VN_h}$ with $pID_{VN_h}^{'}$ and $D_{VN_h}^{'}$.
    
    \item Those malicious vehicles who haven't received new pseudo-identity $pID_{VN}^{'}$ and symmetric key $D_{VN}^{'}$ can't pass the message verification since the group key $(GK, b)$ have been updated to $(GK^{'}, b^{'})$.
\end{itemize}

Therefore, BEPHAP only needs to check the revocation list when the group key is updated, unlike other schemes where RSU checks the revocation list every time verifying the vehicle's authentication request.

\section{Security Evaluation\label{Security Evaluation}}
In this section, we formally prove the properties of mutual authentication and key agreement of BEPHAP based on the widely known Burrows-Abadi-Needham (BAN) logic \cite{DBLP:conf/sosp/BurrowsAN89}, which has been widely used to prove these two fundamental security properties of security protocols. Additionally, we verify various security properties of BEPHAP based on ProVerif tool and extensive analyses.

\subsection{Formal Security Proof Based on the BAN Logic}
Table \ref{BAN logic notations} and Table \ref{BAN logic rules} show the notations and rules of BAN logic, respectively. According to the BAN logic analytic procedure, we present the goals and the assumptions of BEPHAP, and we prove BEPHAP reaches these goals.

\begin{table}[H]
	\renewcommand{\arraystretch}{1.3}
	\setlength{\abovecaptionskip}{0cm}
        \caption{BAN logic Notations}
	\label{BAN logic notations}
	\centering
	\begin{tabular}{cc}
	\toprule
    Notation & Description\\
    \midrule
    
    $P \believes X$  & The entity $P$ believes the formula $X$ is true. \\
    $P \sees X$  & $P$ receives $X$. \\
    $P \oncesaid X$ & $P$ has once said $X$. \\
    $P \controls X$ &  $P$ has jurisdiction over $X$. \\
    $\fresh{X}$ & $X$ is fresh. \\
    $P \secret{X} Q$ & The entities $P$ and $Q$ share a secret $X$. \\
    $P \sharekey{K} Q$ &  The entities $P$ and $Q$ share a secret key $K$.\\
    $\encrypt{X}{K}$ & $X$ is encrypted based on the secret $K$.\\
    
    \bottomrule
\end{tabular}
\end{table}

\begin{table}[H]
	\renewcommand{\arraystretch}{2}
	\setlength{\abovecaptionskip}{0cm}
        \caption{BAN logic Rules}
	\label{BAN logic rules}
	\centering
	\begin{tabular}{p{0.4\columnwidth}c}
	\toprule
    Rule & Meaning\\
    \midrule
    
    $\frac{P \believes Q \sharekey{K} P, P \sees \encrypt{X}{K} }{ P \believes Q \oncesaid X }$, $\frac{P \believes Q \secret{Y} P, P \sees \encrypt{X}{Y} }{ P \believes Q \oncesaid X }  $  & \multirow{2}{*}{The message-meaning rules.} \\
    $\frac{P \believes \fresh{X}, P \believes Q \oncesaid X}{P \believes Q \believes X}$ & The nonce-verification rule. \\
    $\frac{P \sees (X,Y)}{P \sees X}$ & The seeing rule. \\
    $\frac{P \believes Q \controls X, P \believes Q \believes X}{P \believes X}$ &  The jurisdiction rule. \\
    $\frac{P \believes \fresh{X}}{P \believes \fresh{(X,Y)}}$ & The fresh-promotion rule. \\
    $\frac{P \believes X, P \believes Y}{P \believes (X,Y)}$ & The composition rule. \\
    $\frac{P \believes Q \believes (X,Y)}{P \believes Q \believes X}$, $\frac{P \believes (X,Y)}{P \believes X}$ & The decomposition rule.\\
    
    \bottomrule
\end{tabular}
\end{table}

\subsubsection{The Goals}
To achieve mutual authentication with key agreement between VN and $RSU_t$ is the fundamental goal of BEPHAP. Specifically, each entity not only has to believe the pairwise transient key $Ks$, but also needs to believe that the key is believed by the other entity. The goals of BEPHAP are as follows:

Goal 1. $VN \believes VN \sharekey{Ks} RSU_t$.

Goal 2. $RSU_t \believes RSU_t \sharekey{Ks} VN$.

Goal 3. $VN \believes RSU_t \believes RSU_t \sharekey{Ks} VN$.

Goal 4. $RSU_t \believes VN \believes VN \sharekey{Ks} RSU_t$.

\subsubsection{Assumptions}
There are several reasonable and necessary assumptions, since both $CH_{VN}$ and $D_{VN}$ are initially stored in the VN, and $CH_{VN}$ is initially stored in the $RSU_t$.

Assumption 1. $VN \believes CH_{VN}$.

Assumption 2. $RSU_t \believes CH_{VN}$.

Assumption 3. $VN \believes D_{VN}$.

\subsubsection{Security Result}
Theorem \ref{BAN-basis} gives security result of BEPHAP.

\begin{theorem}
\label{BAN-basis}
In BEPHAP, on the premise of ensuring the anonymity of the VN, the VN and the $RSU_t$ mutually authenticate each other and share a session key secretly.
\end{theorem}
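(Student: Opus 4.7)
The plan is to follow the standard BAN-logic recipe: (i) idealize the three protocol flows, (ii) state the freshness, shared-secret, and jurisdiction premises not already listed in Assumptions 1--3, and (iii) derive Goals 1--4 by repeated application of the message-meaning, nonce-verification, jurisdiction, and composition rules. The three idealized flows are: in $REQ_{VN}$, the ciphertext $\encrypt{\beta_{VN}, T_1}{D_{VN}}$ together with the chameleon equation $m_{VN}P+\gamma^*_{VN}A_{VN}=CH_{VN}$, which acts as a commitment witnessing that $\beta_{VN}$ was produced by the trapdoor holder; in $REP_{RSU_t}$, the ciphertext $\encrypt{\beta_{RSU_t}, pID^+_{VN}, D^+_{VN}, T_2}{M_{VN}}$ together with the tag $S_3$ that functions as a MAC over the transcript keyed by $Ks$; and $ACK_{VN}$, which behaves as a MAC over both previous flows, keyed by the shared secret $M_{VN}$ and the pairwise key $Ks$.

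Beyond Assumptions 1--3, I would record the following premises. Freshness: $VN \believes \fresh{T_1}$ and $RSU_t \believes \fresh{T_2}$, whence $\fresh{M_{VN}}$ and $\fresh{Ks}$ follow by the fresh-promotion rule. Shared secrets: $RSU_t \believes VN \sharekey{D_{VN}} RSU_t$, justified because $RSU_t$ reconstructs $D_{VN}$ from $pID_{VN}$ via the group key $(GK,b)$; and symmetrically $VN \believes VN \secret{M_{VN}} RSU_t$ and $RSU_t \believes VN \secret{M_{VN}} RSU_t$, since $M_{VN}=H_3(CH_{VN},D_{VN},\beta_{VN},T_1)$ binds only values each side already trusts. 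Jurisdiction: each party trusts the other over its nonce contribution to the session key, i.e.\ $VN \believes RSU_t \controls (VN \sharekey{Ks} RSU_t)$ and $RSU_t \believes VN \controls (VN \sharekey{Ks} RSU_t)$.

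The derivation then proceeds in three passes. On receipt of $REQ_{VN}$, the message-meaning rule under $D_{VN}$ gives $RSU_t \believes VN \oncesaid \beta_{VN}$; combined with $\fresh{T_1}$ and the successful chameleon check, nonce-verification yields $RSU_t \believes VN \believes \beta_{VN}$, and composition with Assumption 2 lets $RSU_t$ build $M^*_{VN}$, delivering a first form of Goal 2. On receipt of $REP_{RSU_t}$, decrypting $S_2$ under $M_{VN}$ and checking $S_3$ gives $VN \believes RSU_t \oncesaid \beta_{RSU_t}$, promoted to belief by $\fresh{T_2}$; one jurisdiction step on the deterministic $Ks=H_4(\beta_{RSU_t},M_{VN},T_2)$ produces Goal 1, and the verification of $S_3$, which is itself keyed by $Ks_{RSU_t}$, yields Goal 3. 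Finally, on receipt of $ACK_{VN}$, the message-meaning rule applied with $M_{VN}$ plus freshness of $Ks$ inherited from $T_2$ gives $RSU_t \believes VN \believes (VN \sharekey{Ks} RSU_t)$, i.e.\ Goal 4, and confirms Goal 2 against the just-received evidence of VN's possession of $Ks$.

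The main obstacle I expect is justifying the idealization honestly. BAN logic does not natively capture the collision-resistance of the chameleon hash nor the confidentiality of the group key $(GK,b)$, yet both are indispensable: without the former, the check $m_{VN}P+\gamma^*_{VN}A_{VN}=CH_{VN}$ cannot be read as an authenticated commitment to $\beta_{VN}$, and without the latter, the premise $RSU_t \believes VN \sharekey{D_{VN}} RSU_t$ collapses. My approach is to flag these as premises discharged outside BAN --- by the hardness assumptions in Section \ref{Chameleon Hash Function} and by the ProVerif model in the next subsection --- rather than trying to derive them inside the logic, which would otherwise be the weakest link in the argument.
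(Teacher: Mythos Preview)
Your proposal is correct and follows the same three-pass BAN-logic structure as the paper: process $REQ_{VN}$ to reach Goal~2, process $REP_{RSU_t}$ to reach Goals~1 and~3, and process $ACK_{VN}$ to reach Goal~4, with the same rule applications (message-meaning, nonce-verification, fresh-promotion, jurisdiction) in essentially the same order. The one methodological difference worth noting is in premise handling: you front-load explicit freshness, shared-key, and jurisdiction assumptions and candidly flag the chameleon-hash and group-key properties as discharged outside the logic, whereas the paper works from only Assumptions~1--3 and recovers the analogues of your extra premises inline---for instance, it obtains $RSU_t \believes VN \controls REQ_{VN}$ directly from the successful chameleon check rather than postulating a jurisdiction premise, and treats the timestamp checks as producing freshness and the computation of $M_{VN}$ as producing the shared secret. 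Your version is more transparent about which steps are not strict BAN inferences; the paper's version keeps the assumption list minimal at the cost of some informal ``if the check succeeds, then \dots'' reasoning. Both arrive at the four goals by the same chain of deductions.
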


\begin{proof}
First, we list the messages during the cross-domain handover authentication in BEPHAP. Then we prove BEPHAP reaches mutual authentication and key agreement on the premise of ensuring VN's identity anonymity. The details are described as follows:

Message 1. $VN \rightarrow RSU_t: REQ_{VN}$, where
\begin{equation*}
\ REQ_{VN}=(pID_{VN}, m_{VN}, A_{VN}, S_1, T_1).
\end{equation*}

Message 2. $RSU_t \rightarrow VN: REP_{RSU_t}$, where
\begin{equation*}
\ REP_{RSU_t}=(S_2, S_3, T_2).
\end{equation*}

Message 3. $VN \rightarrow RSU_t: ACK_{VN}$, where
\begin{equation*}
\ ACK_{VN} = H_6(M_{VN}, Ks_{VN}, REQ_{VN}, REP_{RSU_t})
\end{equation*}

According to Message 1, we have

\begin{steps}
  \item $RSU_t \sees REQ_{VN}$
  
  $RSU_t$ checks $T_1$ to thwart replay attacks. Thus,
  
  \item $RSU_t \believes \fresh{REQ_{VN}}$
  
  $RSU_t$ computes $PD^*_{VN}=SDE_b(pID_{VN})$, $D_{VN}^*=$ $H_1(PD_{VN}^*$, $GK$, $b$, $pID_{VN})$, $\beta_{VN}^* = SDE_{D_{VN}^*}(S_1)$. Then $RSU_t$ sets $\gamma^*_{VN}=$ $H_2(pID_{VN}$, $\beta^*_{VN}$, $A_{VN}$, $S_1$, $D^*_{VN}$, $pk_{ve}^{RSU_t}$, $T_1)$ and checks if $m_{VN}P+\gamma_{VN}^*A_{VN}=CH_{VN}$. 
  If it is, according to Step 1, we have:
  
  \item $RSU_t \believes VN \oncesaid REQ_{VN}$, $RSU_t \believes VN \controls REQ_{VN}$
  
  Based on Step 2, Step 3 and the nonce-verification rule, we have:
  
  \item $RSU_t \believes VN \believes REQ_{VN}$
  
  According to Step 4, and the decomposition rule, we have:
  
  \item $RSU_t \believes VN \believes \beta_{VN}^*$, $RSU_t \believes VN \believes D_{VN}^*$, $RSU_t \believes VN \believes T_1$
  
  Based on Step 5, Step 3 and the jurisdiction rule, we can get:
  
  \item $RSU_t \believes \beta_{VN}^*$, $RSU_t \believes D_{VN}^*$, $RSU_t \believes T_1$
  
  $RSU_t$ computes $M_{VN}^*=H_3(CH_{VN}, D_{VN}^*, \beta_{VN}^*, T_1)$. According to Step 6, and Assumption 2, we can get:
  
  \item $RSU_t \believes M_{VN}^*$
  
  $RSU_t$ chooses $\beta_{RSU_t}$ and its own timestamp $T_2$. Hence,
  
  \item $RSU_t \believes \beta_{RSU_t}$, $RSU_t \believes T_2$
  
  $RSU_t$ calculates $Ks_{RSU_t} = H_4(\beta_{RSU_t}, M_{VN}^*, T_2)$. Based on Step 7, Step 8, and composition rule, we can get:
  
  \item $RSU_t \believes Ks_{RSU_t}$

  That is $RSU_t \believes RSU_t \sharekey{Ks} VN \hfill \textit{(Goal 2)}$
  
  Based on Message 2, we have:
  
  \item $VN \sees (S_2, S_3, T_2)$
  
  The $VN$ checks $T_2$ to thwart replay attacks. Thus,
  
  
  
  \item $VN \believes \fresh{S_2^*}$, $VN \believes \fresh{S_3^*}$
  
  The $VN$ computes $M_{VN}=H_3(CH_{VN}, D_{VN}, \beta_{VN}, T_1)$. Since both $\beta_{VN}$ and $T_1$ are choosed by the $VN$, according to The Assumption 1 and Assumption 3, we have:
  
  \item $VN \believes M_{VN}$

  The $VN$ calculates $(\beta_{RSU_t}^*, pID^+_{VN}, D^+_{VN})=SDE_{M_{VN}}(S^*_2)$. Then the $VN$ computes $Ks_{VN}=H_4(\beta_{RSU_t}^*, M_{VN}, T_2)$, and checks if $S_3^* = H_5(S_2^*, \beta_{RSU_t}^*, pID^+_{VN}, D^+_{VN}, M_{VN}, Ks_{VN}, T_2)$. If it is, we have:
  
  \item $VN \believes RSU_t \controls REP_{RSU_t}$, $VN \believes \beta_{RSU_t}^*$, $VN \believes T_2$, $VN \believes Ks_{VN}$,
  
  Note that
  \[
  \ Ks_{VN} = H_4(\beta_{RSU_t}^*, M_{VN}, T_2) = Ks_{RSU_t}
  \]
  
  Hence,
  \item $VN \believes VN \sharekey{Ks} RSU_t \hfill \textit{(Goal 1)}$
  
  Note that $M_{VN} = M_{VN}^* \triangleq M$, according to Step 12, we have:
  
  \item $VN \believes VN \secret{M} RSU_t$
  
  According to Message 2, Step 10, Step 15, and message-meaning rule, we have:
  
  \item $VN \believes RSU_t \oncesaid Ks_{RSU_t}$
  
  According to Step 11, and fresh-promotion rule, we have:
  
  \item $VN \believes \fresh{Ks_{RSU_t}}$
  
  Based on Step 16, Step 17, and nonce-verification rule, we have:
  
  \item $VN \believes RSU_t \believes Ks_{RSU_t}$
  
  Note that $Ks_{RSU_t} = Ks_{VN}$, we can get:
  
  \item $VN \believes RSU_t \believes RSU_t \sharekey{Ks} VN \hfill \textit{(Goal 3)}$
  
  According to Message 3, we obtain:
  
  \item $RSU_t \sees ACK_{VN}$
  
  Since $M_{VN}^* = M_{VN} \triangleq M$, according to Step 7, we can get:
  
  \item $RSU_t \believes RSU_t \secret{M} VN$
  
  $RSU_t$ checks if $ACK_{VN} = H_6(M_{VN}^*,$ $Ks_{RSU_t},$ $REQ_{VN}, $ $REP_{RSU_t})$. if it is, according to Message 3, Step 20, Step 21 and the message-meaning rule, we have:
  
  \item $RSU_t \believes VN \oncesaid (Ks_{RSU_t}, REQ_{VN})$
  
  Based on Step 2, and the fresh-promotion rule, we have: 
  
  \item $RSU_t \believes \fresh{Ks_{RSU_t}, REQ_{VN}}$
  
  According to Step 22, Step 23, and the nonce-verification rule, we obtain:
  
  \item $RSU_t \believes VN \believes (Ks_{RSU_t}, REQ_{VN})$
  
  According to Step 24, and the decomposition rule, we have:
  
  \item $RSU_t \believes VN \believes Ks_{RSU_t}$
  
  Note that $Ks_{RSU_t} = Ks_{VN}$, we obtain:
  
  \item $RSU_t \believes VN \believes VN \sharekey{Ks} RSU_t \hfill \textit{(Goal 4)}$
\end{steps}

\qed
\end{proof}

To summarize, BEPHAP achieves the four goals. Both $VN$ and $RSU_t$ believe that they share $Ks$ with one another and believe the other is a legitimate entity. The identity of the $VN$ is hidden throughout the process, thus achieving anonymity.

\subsection{Formal Verification}
In this section, we model BEPHAP and check its security by ProVerif, which is a widely known cryptographic protocol verification tool in the formal model (so-called Dolev-Yao model \cite{DBLP:journals/tit/DolevY83}). The security of various authentication protocols or encryption schemes can be proved by ProVerif, for instance, Diffie-Hellman key exchange algorithms, hash functions, and signature schemes\cite{blanchet2018proverif}\cite{bauer2000security}\cite{DBLP:journals/jpdc/XuLLXJ21}. We define the following eight events:

\begin{itemize}
    \item \textit{event VNAcRSU}: The VN authenticated the RSU successfully.
    \item \textit{event RSUAcVN}: The RSU authenticated the VN successfully.
    \item \textit{event VNTerm}: The VN completed the authentication protocol.
    \item \textit{event RSUTerm}: The RSU completed the authentication protocol.
    \item \textit{event e1}: the first message is sent.
    \item \textit{event e2}: the second message is sent.
    \item \textit{event e3}: the third message is sent.
    \item \textit{event endmessage}: All messages have been sent and the protocol has completed.

\end{itemize}

We use ProVerif to verify that VN and RSU successfully perform mutual authentication and key agreement, with all messages sent in the correct order. Additionally, we verify the randomness of all messages and the strong secrecy of session key $Ks$, and secret value $M_{VN}, D_{VN}, D_{VN}^+, \beta_{VN}, \beta_{RSU}, pID_{VN}^+, \alpha_{VN}$, where the strong secrecy means that the attacker is incapable of distinguishing when the secret changes. Table \ref{Proverif result} shows the verification results of our protocol. The results indicate that our protocol guarantees the confidentiality of parameters $k,x,GK,b,PD$ and achieves mutual authentication and key agreement in which all events are executed in order. And the attacker can't obtain the secret values in the protocol. The strong secrecy and randomness are verified by observational equivalence, and the output in ProVerif is true, which means the verification is successful. Our source code is published on Github\footnote{https://github.com/KenHopkin/protocol-verification}.

\begin{table}
\centering
\caption{The Verification Results Given by The ProVerif}
\label{Proverif result}
\begin{threeparttable}
\resizebox{1.0\columnwidth}{!}{
\begin{tabular}{l}
\toprule
secrecy assumption verified: fact unreachable attacker(k), ok \\
secrecy assumption verified: fact unreachable attacker(x), ok \\
secrecy assumption verified: fact unreachable attacker(GK), ok \\
secrecy assumption verified: fact unreachable attacker(b), ok \\
secrecy assumption verified: fact unreachable attacker(PD), ok \\
inj-event(VNTerm()) $\Longrightarrow$ inj-event(RSUAcVN()) is true. \\
inj-event(RSUTerm()) $\Longrightarrow$ inj-event(VNAcRSU()) is true. \\
event(RSUTerm(VN,x,Ks)) $\&\&$ event(VNAcRSU(VN,x,Ks')) $\Longrightarrow$ Ks = Ks' is true. \\
event(VNTerm(x,RSU,Ks)) $\&\&$ event(RSUAcVN(x,RSU,Ks')) $\Longrightarrow$ Ks = Ks' is true. \\
event(RSUTerm(x,y,k)) $\&\&$ event(VNAcRSU(x',y',k)) $\Longrightarrow$ x = x' $\&\&$ y = y' is true. \\
event(VNTerm(x,y,k)) $\&\&$ event(RSUAcVN(x',y',k)) $\Longrightarrow$ x = x' $\&\&$ y = y' is true. \\
not attacker (M) is true. \\
not attacker ($\beta_{VN}$) is true. \\
not attacker ($\beta_{RSU}$) is true.\\
not attacker (Ks) is true.\\
not attacker ($D_{VN}$) is true.\\
not attacker ($D_{VN}^+$) is true.\\
not attacker ($pID_{VN}^+$) is true. \\
not attacker ($\alpha_{VN}$) is true. \\
inj-event(endmessage()) $\Longrightarrow$ (inj-event(e3()) $\Longrightarrow$ (inj-event(e2()) $\Longrightarrow$ inj-event(e1()))) is true. \\
Observational equivalence is true. \\
\bottomrule
\end{tabular}
}

\end{threeparttable}
  
\end{table}

\subsection{Further Security Analysis}

In this section, we further demonstrate that BEPHAP possesses multiple security properties. Since BEPHAP's resistance to the basic \textit{Man-in-the-middle}, \textit{Impersonation}, and \textit{Replay} attacks has been verified by ProVerif, we only show other important security properties in the following.

\subsubsection{Mutual authentication with key agreement}

for mutual authentication, $RSU_t$ checks the legitimacy of the $VN$ according to Equation \ref{chameleon}, whose soundness is presented as follows:

\begin{eqnarray}    \label{mutual_authentication_eq}
m_{VN}P+\gamma_{VN}^*A_{VN}&=&(k_{VN}-r_{VN}x_{VN})P + \gamma_{VN}^*A_{VN} \nonumber    \\
~&=&(k_{VN}-r_{VN}x_{VN})P + \gamma_{VN}^*\alpha_{VN}Y_{VN} \nonumber    \\
~&=&k_{VN}P \nonumber    \\
~&=&(m_{VN}^*+r_{VN}^*x_{VN})P \nonumber    \\
~&=&CH_{VN}
\end{eqnarray}

The $VN$ verifies the legitimacy of the $RSU_t$ based on $S_3^*=H_5(S_2^*, \beta^*_{RSU_t}, pID^+_{VN}, D^+_{VN}, M_{VN}, Ks_{VN}, T_2)$. If successful, it means that RSU owns $GK,b$.

As for key agreement, it's obvious that 
\[
  \ Ks_{VN} = H_4(\beta_{RSU_t}^*, M_{VN}, T_2) = Ks_{RSU_t}
\]

According to the protocol, $Ks = Ks_{VN} = Ks_{RSU_t}$ is the pairwise transient key.

\subsubsection{Data Confidentiality}

As claimed by the threat model defined in \ref{Threat Model}, the attacker can read and intercept all the messages sent over insecure channels. Hence, the attacker can obtain $pID_{VN}$, $m_{VN}$, $A_{VN}$, $S_1$ to $s_3$, $T_1$, $T_2$, $ACK_{VN}$. It has been verified by ProVerif that secret values including $k_{VN}, x_{VN}, \alpha_{VN}, Gk, b, pID_{VN}^+, D_{VN}^+, Ks$, are not exposed to the external adversaries. Then our protocol should ensure that $k_{VN}, x_{VN}, \alpha_{VN}$ are not exposed to internal adversaries. According to the ECDLP stated in \ref{Elliptic Curve Cryptosystem}, the internal adversary cannot get $\alpha_{VN}$ based on $P$ and $A_{VN}$. Obviously, the internal adversary also cannot obtain $k,x$ only based on $m_{VN}$.

\subsubsection{Identity Anonymity}
In the registration phase, VN encrypts $ID_{VN}$ with $pk_{en}^{LEA}$, and LEA store $(\sigma, CH_{VN}, T_{Exp})$ without leaking $ID_{VN}$. In the cross-domain handover authentication phase, $pID_{VN}$, the VN's pseudo-identity is adopted rather than the real identity $ID_{VN}$. Hence, no one except for LEA can reveal VN's real identity and the identity anonymity is achieved.

\subsubsection{Unlinkability}

The Proposed scheme achieves unlinkability that no attacker can link multiple messages to the same VN except for the entities that initially need this information, including LEA, RSM, and RSU. Since the $pID_{VN}$ is a random value encrypted by AES which is a pseudo-random function, according to \ref{pseudo-random function}, no adversary can tell whether two $pID_{VN}$ are derived from the same VN. And the randomness of $pID_{VN}$ has also been proved by ProVerif. What's more, the adversary cannot calculate $CH_{VN}$ due to the lack of $GK, b, D_{VN}$ for obtaining $\beta_{VN}$. So the adversary cannot link multiple messages based on $CH_{VN}$. Therefore, the unlinkability is achieved.

\subsubsection{Traceability}

Suppose the VN is detected to be malicious, RSU can generate a signature $\sigma_{RT}=Sign_{sk_{sig}^{RSU}}(REQ_{VN})$, and send $(\sigma_{RT}, REQ_{VN})$ to LEA. Then the LEA can trace the real identity of the malicious VN as follows:
\begin{itemize}
    \item Compute $PD_{VN}^* = SDE_b(pID_{VN})$, $D_{VN}^* = H_1(PD_{VN}^*, GK, b, pID_{VN})$,  $\beta_{VN}^* = SDE_{D_{VN}^*}(S_1)$.
    \item Calculate $\gamma_{VN}^*$ $=H_2(pID_{VN},$ $\beta_{VN}^*,$ $A_{VN},$ $S_1,$ $D_{VN}^*,$ $pk_{ve}^{RSU_t},$ $T_1)$.
    \item Set $CH_{VN}=m_{VN}P+\gamma_{VN}^*A_{VN}$.
    \item Find the transaction identity $TXID_{VN}$ in the blockchain based on $CH_{VN}$.
    \item According to $TXID_{VN}$, find the item $(ID_{VN}, TXID_{VN})$ from the local storage and output $(ID_{VN}, D_{VN}^*)$.
\end{itemize}

LEA can also obtain authentication records on $CH_{VN}$ from multiple RSUs to track the trajectory of the malicious VN.

\subsubsection{Non-repudiation}

As stated in \ref{Chameleon Hash Function}, the chameleon hash function has collision resistance. Only the owner of the private key $(k_{VN},x_{VN})$ can generate the collision of the chameleon hash. Since we use ECC in \ref{Elliptic Curve Cryptosystem} to implement it, if the attacker wants to find the collision, the attacker needs to solve the ECDLP, but this is not feasible. Hence, The proposed scheme achieves Non-repudiation.

\subsubsection{Non-frameability}

Since only the private key owner can generate the parameters for calculating $CH_{VN}$, no other entity can forge the message sent by the VN to frame the VN. However, LEA can trace malicious vehicles, and it may output the ID of an honest VN in tracing a malicious VN, thereby framing the honest one. Specifically, as for $(\sigma_{RT}, REQ_{VN})$, where $REQ_{VN_m} = pID_{VN_m}, m_{VN_m}, A_{VN_m}, S_1, T_1$ which is a malicious VN's request, assume $(ID_{VN_m}, D_{VN_m}^*)$ is the result that LEA gets after executing traceability. But, the LEA output $(ID_{VN_h}, D_{VN_m}^*)$, where $ID_{VN_h} \neq ID_{VN_m}$ to frame $VN_h$, which is an honest VN. In this situation, the $VN_h$ can give $TXID_{VN_h}$. Then, a third party can verify that the VN is framed by the LEA as follows:
\begin{itemize}
    \item Check $Veri_{pk_{ve}^{RSU}}(\sigma_{RT},REQ_{VN})=ture?$
    \item Compute $\gamma_{VN_m}^*=$ $H_2(pID_{VN_m},$ $\beta_{VN_m}^*,$ $A_{VN_m},$ $S_1,$ $D_{VN_m}^*,$ $pk_{ve}^{RSU}, T_1)$.
    \item Set $CH_{VN_m}=m_{VN_m}P+\gamma_{VN_m}A_{VN_m}$
    \item Obtain $(\sigma, CH_{VN_h}, T_{Exp})$ from the blockchain based on $TXID_{VN_h}$.
    \item Check $Veri_{pk_{ve}^{LEA}}(\sigma, ID_{VN_h} || CH_{VN_h} || T_{Exp})=true?$
    \item If it's true and $CH_{VN_h} \neq CH_{VN_m}$, the third party believes $VN_h$ is framed by the LEA.
    
\end{itemize}

\subsubsection{Key Escrow Freeness}
Since the VNs' secret keys are entirely chosen by themselves in the registration phase, BEPHAP is a key escrow freeness authentication protocol with key agreement.

\subsubsection{Cross-domain}
The VN in BEPHAP can securely perform handover authentication between different domains just like the intra-domain handover, based on the global synchronization of the blockchain, the trapdoor collision property of the chameleon hash function, and the security properties of ECC and pseudo-random functions.

\begin{table*}
\renewcommand\tabcolsep{3.5pt} 
\centering
\caption{Comparison of Functionality }
\label{functionality compaison}
\begin{threeparttable}

\begin{tabular}{lcccccccccccc}
\toprule
\diagbox{Scheme}{Functionality} & MA & DC & IA &  Unlinkability & Traceability & Non-repudiation & Non-frameability & KEF & Cross-domain & KA & FSP & FSV\\
\midrule
Zhang et al.\cite{DBLP:journals/tdsc/0002DB021} & Yes & Yes & Yes & No & Yes & Yes & Yes & Yes & Yes & Yes & Yes & Yes \\ 
Yang et al.\cite{DBLP:journals/tifs/YangZZCZ22} & No & Yes & Yes & Yes & Yes & Yes & Yes & Yes & – & No & Yes & No \\
Wei et al.\cite{DBLP:journals/tifs/WeiCXCZ21} & No & No & Yes & Yes & Yes & Yes & Yes & Yes & – & No & Yes & No \\
Feng et al.\cite{DBLP:journals/tifs/FengSXW21} & No & No & Yes & Yes & Yes & Yes & Yes & Yes & Yes & No & Yes & No \\
Jiang et al.\cite{DBLP:journals/tits/JiangZW16} & Yes & No & Yes & Yes & Yes & Yes & No & No & Yes & No & Yes & No \\
Wang et al.\cite{DBLP:journals/tits/WangCKSTL21} & Yes & No & Yes & Yes & Yes & Yes & No & No & – & No & Yes & No \\
Xu et al.\cite{DBLP:journals/jpdc/XuLLXJ21} & Yes & Yes & Yes & Yes & Yes & Yes & No & No & Yes & Yes & Yes & Yes\\
Feng et al.\cite{DBLP:journals/jsa/FengSXL21} & Yes & Yes & Yes & Yes & Yes & Yes & No & No & Yes & No & Yes & No\\
Lu et al.\cite{DBLP:journals/tvlsi/LuWQZL19} & No & No & Yes & No & Yes & Yes & Yes & Yes & – & No & Yes & No\\
BEPHAP & Yes & Yes & Yes & Yes & Yes & Yes & Yes & Yes & Yes & Yes & Yes & Yes\\
\bottomrule
\end{tabular}

\label{table:functionality notation}

    \begin{tablenotes}
        \footnotesize
        \item[1] \textbf{MA}: mutual authentication. \textbf{DC}: data confidentiality. \textbf{IA}: identity anonymity. \textbf{KEF}: key escrow freeness. \textbf{KA}: key agreement. \textbf{FSP}: formal security proof. \textbf{FSV}: Verification by formal security verification tools.
        
        \item[2] The symbol “–” means the functionality is not involved.

    \end{tablenotes}
\end{threeparttable}
  
\end{table*}

\section{Functionality and Performance Evaluation\label{Functionality and Performance Evaluation}}
In this section, we analyze the functionality and performance of BEPHAP and compare it with previous schemes. 

\subsection{Functionality Comparison}
We present the functionality comparison of BEPHAP and related approaches in Table \ref{functionality compaison}.

Note that \cite{DBLP:journals/tifs/YangZZCZ22}, \cite{DBLP:journals/tifs/WeiCXCZ21}, \cite{DBLP:journals/tifs/FengSXW21}, \cite{DBLP:journals/tvlsi/LuWQZL19} do not implement mutual authentication, which means that one of the two communicating parties may be impersonated by an attacker. \cite{DBLP:journals/tdsc/0002DB021}, \cite{DBLP:journals/tvlsi/LuWQZL19} do not provide the unlinkability, which may lead to the leakage of vehicle trajectory privacy. \cite{DBLP:journals/tits/JiangZW16}, \cite{DBLP:journals/tits/WangCKSTL21}, \cite{DBLP:journals/jpdc/XuLLXJ21}, \cite{DBLP:journals/jsa/FengSXL21} do not achieve non-frameability, which means that the user in their scheme may be framed by an external or internal attacker.

To the best of our knowledge, BEPHAP is the first authentication protocol scheme that simultaneously implements all properties in Table \ref{functionality compaison} in the field of IoV, including mutual authentication, data confidentiality, identity anonymity, unlinkability, traceability, non-repudiation, non-frameability, key escrow freeness, cross-domain, key agreement, formal security proof, and verification by formal security verification tools.

\begin{table*}
\renewcommand\tabcolsep{4.5pt} 
\centering
\caption{Comparison of Computation Overhead}
\label{computation compaison}
\begin{threeparttable}
\resizebox{2\columnwidth}{!}{
\begin{tabular}{l|c|c|c}
\toprule
Scheme & Single VN request & $RSU_t$ verify a single request & $RSU_t$ verify $n$ requests\\
\midrule
RUSH\cite{DBLP:journals/tdsc/0002DB021} & $4T_H + T_{SM-ECC} + T_{MSM-ECC}$ & $5T_H+ T_{SM-ECC} + T_{MSM-ECC}$ & $5nT_H+ n(T_{SM-ECC} + T_{MSM-ECC})$ \\


PPAAS\cite{DBLP:journals/tifs/YangZZCZ22} & $2T_H + 6T_{SM1-BP}$ & $2T_H + 5T_{BP} + 3T_{SM1-BP} + 3T_{SMT-BP} + 2T_{MTP}$ & $2T_H + 5T_{BP} + 3nT_{SM1-BP} + 3T_{SMT-BP} + 2nT_{MTP}$ \\


P2BA\cite{DBLP:journals/tifs/FengSXW21} & $2T_{BP}+11T_{SM1-BP}+12T_{EX-BP}$ & $4T_{BP}+10T_{SM1-BP}+10T_{EX-BP}$ & $4T_{BP}+(6n-1)T_{SM1-BP}$ \\


HDMA\cite{DBLP:journals/tits/WangCKSTL21} & $T_{EX} + T_{RSA-V} + T_{RSA-DE}$ & $T_{RSA-V} + T_{RSA-EN} + T_{RSA-DE}$ & $nT_{RSA-V} + nT_{RSA-EN} + nT_{RSA-DE}$ \\


BEPHAP & $T_{AES-EN} + 5T_{H} +T_{AES-DE}$ & $2T_{AES-DE} + 7T_{H} + T_{MSM-ECC} + 2T_{AES-EN} $ & $2nT_{AES-DE} + 7nT_{H} + nT_{MSM-ECC} + 2nT_{AES-EN}$ \\

\bottomrule
\end{tabular}
}

\end{threeparttable}
  
\end{table*}

\subsection{Computation Cost}


In this section, we compare BEPHAP with its related work. We use the standard cryptographic library MIRACL \cite{Miracl_Library}, a multi-precision integer and rational arithmetic C/C++ library, to simulate some operations' computation overhead with an Intel i5-6200U CPU @ 2.40 GHz as the VN and an Intel i9-10900 CPU @ 2.81 GHz as the $RSU_t$. The overhead of each operation is listed in Table \ref{cryptographic operations}. $T_{VN}$ denotes the execution time of the operations on the VN, and $T_{RSU}$ is the execution time of the operations on the $RSU_t$. Note that $T_{MSM-ECC} = 1.25T_{SM-ECC}$\cite{DBLP:journals/twc/YangHWD10}. 

The computation overhead of various schemes is listed in Table \ref{computation compaison}. In RUSH \cite{DBLP:journals/tdsc/0002DB021}, the computation cost on VN is $4T_H + T_{SM-ECC} + T_{MSM-ECC} = 0.83296$ ms for a single request, whereas the computation overhead on RSU is $5T_H+ T_{SM-ECC} + T_{MSM-ECC} = 0.2973$ ms for verifying a single request, and $5nT_H+ n(T_{SM-ECC} + T_{MSM-ECC})$ for verifying $n$ requests. Note that the computational overhead is optimized by pre-computation in\cite{DBLP:journals/tdsc/0002DB021}. In PPAAS \cite{DBLP:journals/tifs/YangZZCZ22}, the computation overhead is $2T_H + 6T_{SM1-BP} = 2.84978$ ms on VN for a single request, $2T_H + 5T_{BP} + 3T_{SM1-BP} + 3T_{SMT-BP} + 2T_{MTP} = 3.98364$ ms on RSU for verifying a single request, $2T_H + 5T_{BP} + 3nT_{SM1-BP} + 3T_{SMT-BP} + 2nT_{MTP}$ on RSU for verifying $n$ requests. In P2BA  \cite{DBLP:journals/tifs/FengSXW21}, the computation cost on VN is $2T_{BP}+11T_{SM1-BP}+12T_{EX-BP} = 21.6964$ ms for a single request, whereas the computation overhead on RSU is $4T_{BP}+10T_{SM1-BP}+10T_{EX-BP} = 8.38756$ ms for verifying a single request, and $4T_{BP}+(6n-1)T_{SM1-BP}$ for verifying $n$ requests. In HDMA \cite{DBLP:journals/tits/WangCKSTL21}, the computation overhead is $T_{EX} + T_{RSA-V} + T_{RSA-DE} = 11.11611$ ms on VN for a single request, $T_{RSA-V} + T_{RSA-EN} + T_{RSA-DE} = 4.60747$ ms on RSU for verifying a single request, $nT_{RSA-V} + nT_{RSA-EN} + nT_{RSA-DE}$ on RSU for verifying $n$ requests. As for BEPHAP, the computation cost is $T_{AES-EN} + 5T_{H} +T_{AES-DE} = 0.01$ ms on VN for a single request, $2T_{AES-DE} + 7T_{H} + T_{MSM-ECC} + 2T_{AES-EN} = 0.16856$ ms on RSU for verifying a single request, $2nT_{AES-DE} + 7nT_{H} + nT_{MSM-ECC} + 2nT_{AES-EN}$ on RSU for verifying $n$ requests. It is worth noting that the computation of $A_{VN}$ in \ref{Cross-Domain_Handover_Authentication} can be calculated before the authentication phase by pre-computation. Fig.\ref{Computation on RSU} illustrates RSU's computation versus the number
of requests. Obviously, the computation cost of
our is the lowest compared with other schemes on RSU. Compared with HDMA \cite{DBLP:journals/tits/WangCKSTL21}, BEPHAP reduces the computational cost of RSU by two orders of magnitude. The vertical axis adopts the log scale to make a more explicit comparison.  Fig.\ref{Computation on VN} shows the computational overhead on VN side. It can be observed that BEPHAP is much better than other schemes. It is worth noting that BEPHAP reduces the computational cost of VN by two or even three orders of magnitude compared to other schemes, which indicates that BEPHAP is much more suitable for IoV scenarios with limited vehicle computing capacity. Fig.\ref{average authentication delay} illustrates the average authentication delay versus the number of requests. As Fig.\ref{average authentication delay} shows, BEPHAP achieves the best performance in average authentication delay.

\begin{table}[H]
	\renewcommand{\arraystretch}{1.3}
	\renewcommand\tabcolsep{3.3pt} 
	\setlength{\abovecaptionskip}{0cm}
    \caption{Execution Time of Several Cryptographic Operations (ms)}
	\label{cryptographic operations}
	\centering
	
	\resizebox{1.0\columnwidth}{!}{
	\begin{tabular}{c|p{0.55\columnwidth}|c|c}
	\toprule
    Operation & Detail & $T_{VN}$ & $T_{RSU}$\\
    \midrule

    $T_{SM-ECC}$  & Scalar multiplication related to the ECC  & 0.36688 & 0.13060\\
    $T_{MSM-ECC}$  & Multi elliptic curve scalar multiplication & 0.45860 & 0.16325 \\
    $T_H$ & One-way hash function & 0.00187 & 0.00069 \\
    $T_{BP}$ & Bilinear paring & 1.38217 & 0.51329 \\
    $T_{SM1-BP}$ & Scalar multiplication operation in $\mathbb{G}_1$ related to bilinear pairing & 0.47434 & 0.12248 \\
    $T_{SMT-BP}$ & Scalar multiplication operation in $\mathbb{G}_T$ related to bilinear pairing & 0.24195 & 0.10825 \\
    $T_{EX-BP}$ & Exponentiation related to bilinear pairing & 1.14286 & 0.51096 \\
    $T_{MTP}$ & MapToPoint hash operation of the bilinear pairing & 0.74839 & 0.36181 \\
    $T_{EX}$ & Modular exponentiation & 2.77497 & 1.15168 \\
    $T_{RSA-G}$ & RSA signature generation & 7.88791 & 4.10632 \\
    $T_{RSA-V}$ & RSA signature verification & 0.52473 & 0.27127 \\
    $T_{RSA-EN}$ & RSA encryption & 0.50510 & 0.26081 \\
    $T_{RSA-DE}$ & RSA decryption & 7.81641 & 4.07539 \\
    $T_{AES-EN}$ & AES encryption & 0.00032 & 0.00012\\
    $T_{AES-DE}$ & AES decryption & 0.00033 & 0.00012\\
    
    \bottomrule
\end{tabular}
}
\end{table}

\begin{figure}[!t]
\centering
\includegraphics[width=2.5in]{"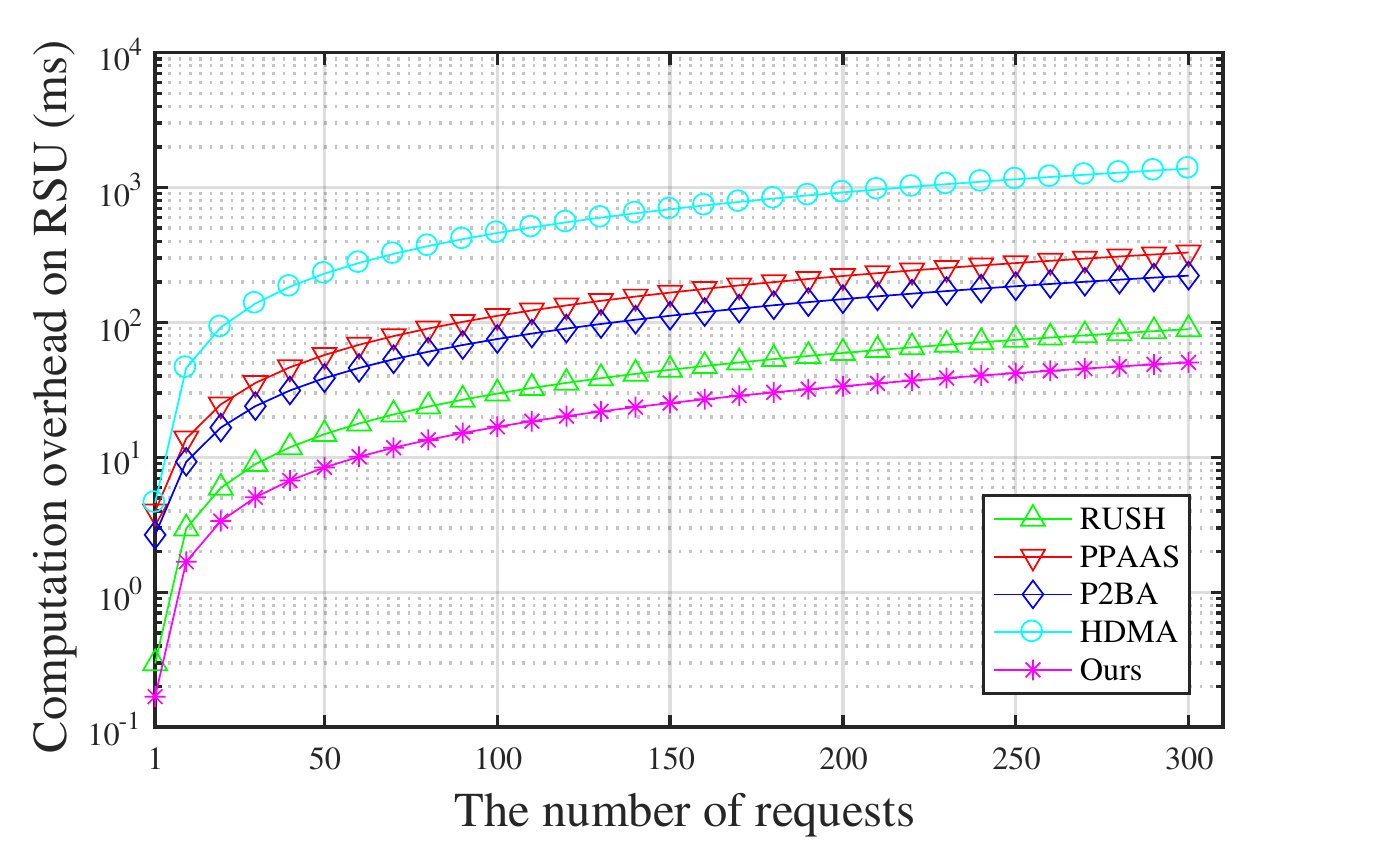"}
\caption{Computational overhead of multiple requests on RSU.}
\label{Computation on RSU}
\end{figure}

\begin{figure}[!t]
\centering
\includegraphics[width=2.5in]{"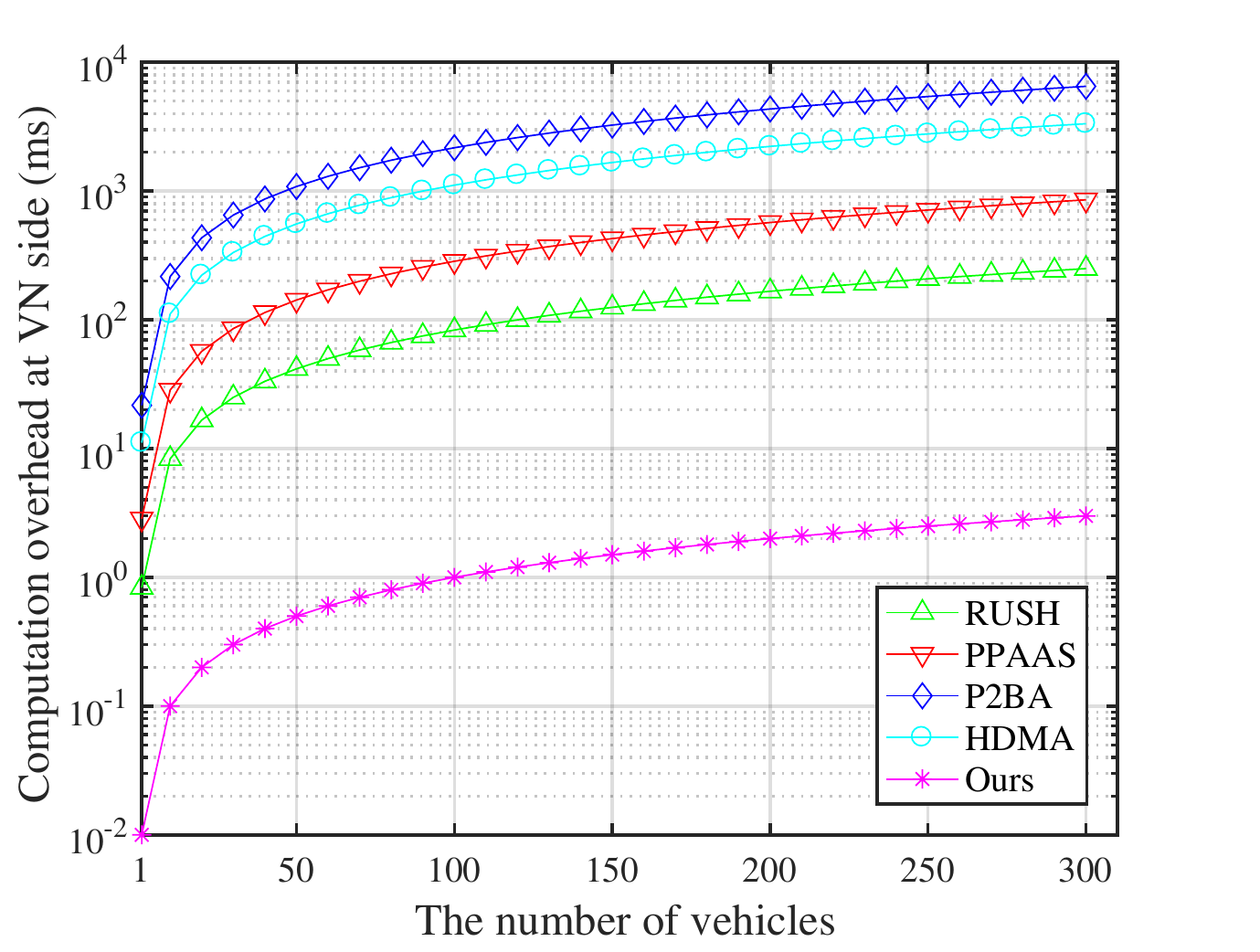"}
\caption{Computational overhead on VN side.}
\label{Computation on VN}
\end{figure}

\begin{figure}[!t]
\centering
\includegraphics[width=2.5in]{"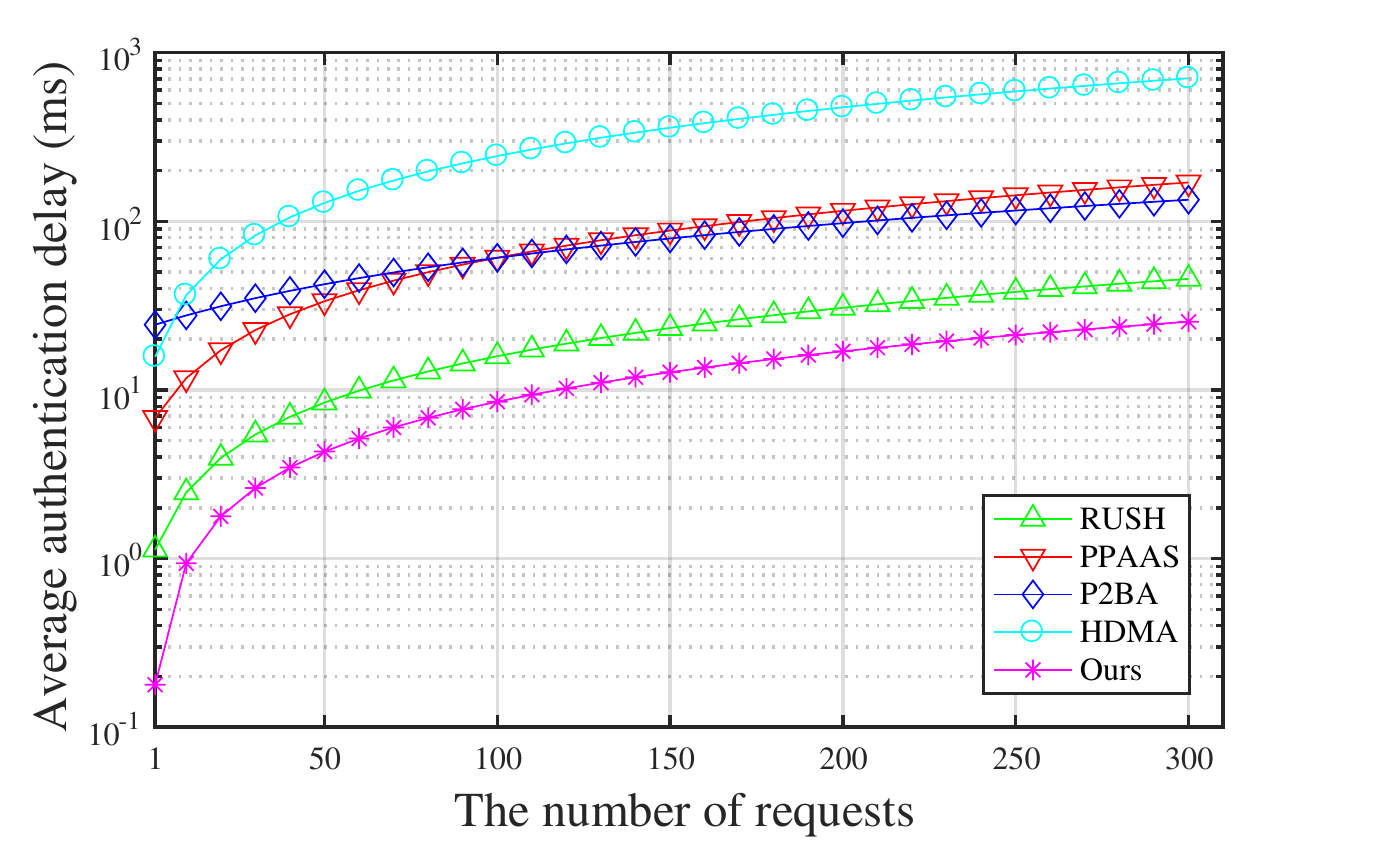"}
\caption{Average authentication delay versus the number of requests.}
\label{average authentication delay}
\end{figure}

\begin{figure}[!t]
\centering
\includegraphics[width=2.5in]{"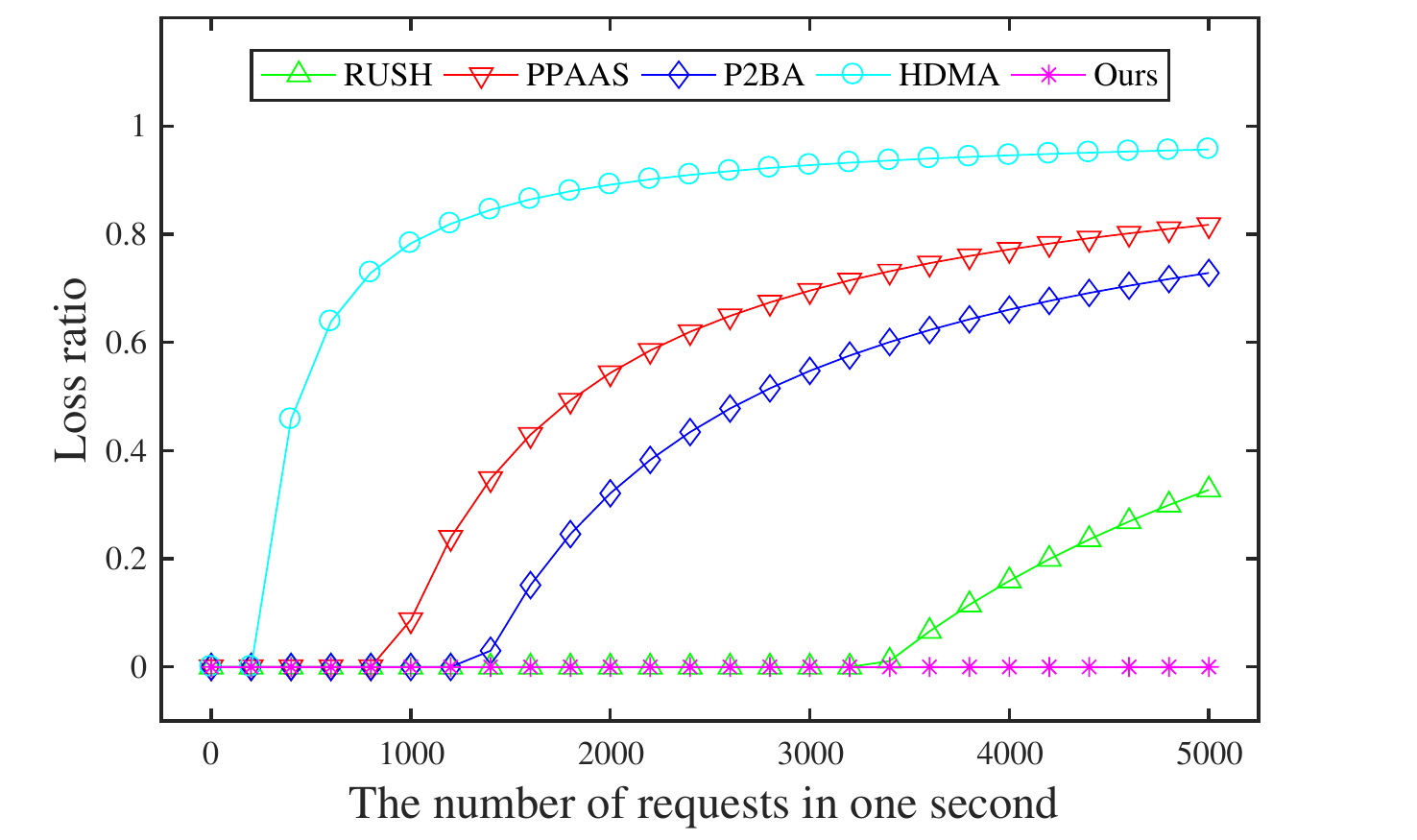"}
\caption{Loss ratio of requests on RSU.}
\label{Loss ratio}
\end{figure}

\subsection{Communication Cost}

In this section, we evaluate the communication overhead of BEPHAP. Let $q$ be a prime of length $l_q=224$ bits and $p$ be a prime of length $l_p=2048$ bits, which can keep up with the strength requirement since a 224-bit ECC key provides almost the same security level as a 2048-bit RSA key \cite{gallagher2013digital}. Note that $l_{pID}=16$ bytes, $l_{ID}=16$ bytes, and $l_{ts}=4$ bytes denote the length of the pseudo-identity, the identity and the timestamp, respectively. We adopt the hash function SHA3 whose output length is 160 bits. The signature size of P2BA \cite{DBLP:journals/tifs/FengSXW21} $l_{P2BA}$ is 768 bytes. As for PPAAS \cite{DBLP:journals/tifs/YangZZCZ22}, let $l_{G_1}$/$l_m$ denote the length of an element in $\mathbb{G}_1$/a message $m$ in PPAAS \cite{DBLP:journals/tifs/YangZZCZ22}. Since it is stated in \cite{DBLP:journals/tifs/YangZZCZ22}, $l_{G_1}$ is 77 bytes, $l_m$ is 100 bytes, we can get the length of a single ciphertext in \cite{DBLP:journals/tifs/YangZZCZ22} is 347 bytes. The message sizes of each scheme are given in Table \ref{communication compaison}. The total size of messages in BEPHAP is 212 bytes, less than those of the other schemes.

\begin{table*}
\renewcommand\tabcolsep{4.0pt} 
\centering
\caption{Comparison of Message Sizes (byte)}
\label{communication compaison}
\begin{threeparttable}
\resizebox{1.5\columnwidth}{!}{
\begin{tabular}{l|c|c|c|c}
\toprule
Scheme & Message1 & Message2 & Message3 & Total Message \\
\midrule
RUSH\cite{DBLP:journals/tdsc/0002DB021} & $3l_{q} + l_{pID} + l_{ts} = 104$ & $3l_q + l_{ID} + l_{ts} + l_h = 124$ & $l_h=20$ & 248 \\

PPAAS\cite{DBLP:journals/tifs/YangZZCZ22} & $3l_{G_1} + l_m + l_{ID} = 347$ & N/A & N/A & 347 \\

P2BA\cite{DBLP:journals/tifs/FengSXW21} & $l_{P2BA}=768$ & N/A & N/A & 768 \\

HDMA\cite{DBLP:journals/tits/WangCKSTL21} & $5l_p = 1280$ & $11l_p+2l_{ts} = 2824$ & N/A & 4104 \\
BEPHAP & $3l_q+l_{pID}+l_{ts}=104$ & $l_q+l_{pID}+l_{ts}+2l_h=88$ & $l_h=20$ & 212 \\
\bottomrule
\end{tabular}
}

\end{threeparttable}
  
\end{table*}

\subsection{Message Loss Ratio}
We use an Intel i5-6200U CPU @ 2.40 GHz as the VN and an Intel i9-10900 CPU @ 2.81 GHz as the $RSU_t$, then simulate the authentication process to calculate the message loss ratio.
We compare the message loss ratio \cite{DBLP:journals/tits/WangCKSTL21} of each scheme to further evaluate BEPHAP. 
The loss ratio is the ratio between the number of dropped requests and total requests within a fixed interval (1000 ms here). Fig.\ref{Loss ratio} shows the loss ratio on RSU in various schemes for handling the requests. It can be observed that when the number of requests is not greater than 5000, the message loss rate in BEPHAP remains zero, which is the best performance among the related schemes.

\section{Conclusion\label{Conclusion}}
In this paper, we present a Blockchain-based Efficient Privacy-preserving Handover Authentication Protocol with key agreement (BEPHAP) for IoV under a security model in which cloud servers and RSUs may launch attacks. Leveraging blockchain, symmetric cryptography, and chameleon hash, we can implement cross-domain privacy-preserving handover authentication. To the best of our knowledge, BEPHAP is the first blockchain-based authentication protocol scheme for IoV that simultaneously implement mutual authentication with key agreement, data confidentiality, identity anonymity, unlinkability, traceability, non-repudiation, non-frameability, key escrow freeness, cross-domain, formal security proof, and verification by formal security verification tools. BEPHAP is particularly suitable for IoV scenarios with constrained vehicle computing capabilities since vehicles in BEPHAP only need to perform lightweight cryptographic operations in the authentication phase, such as symmetric encryption and hash. The experiments demonstrate that an individual RSU can handle an authentication request within 0.2 ms, which is more efficient than the existing schemes. And it is worth noting that the computational cost of VN in BEPHAP is reduced by two or even three orders of magnitude compared to other schemes.

\bibliographystyle{./IEEEtran}

\begin{IEEEbiography}[{\includegraphics[width=1in,height=1.25in,clip,keepaspectratio]{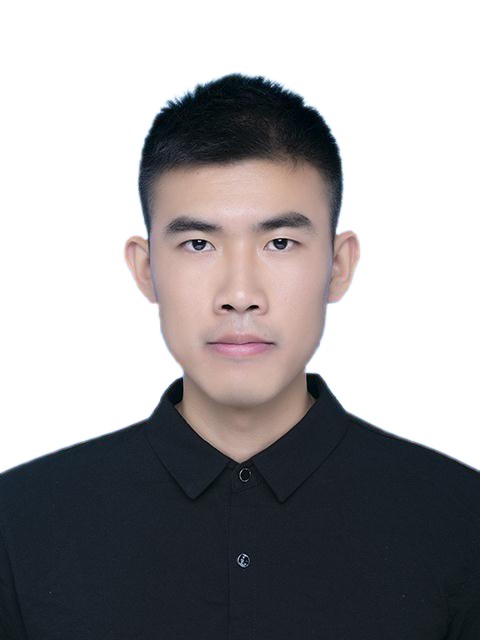}}]{Xianwang Xie} received his B.E. degree in mechatronics engineering from Zhejiang Sci-Tech University, Hangzhou, China, in 2020. He is currently working toward the M.E. degree in Electronic and Information Engineering with the Institute of Information Engineering, CAS, Beijing, China. His research interests include blockchain, network security, the internet of vehicles, etc.
\end{IEEEbiography}

\begin{IEEEbiography}[{\includegraphics[width=1in,height=1.25in,clip,keepaspectratio]{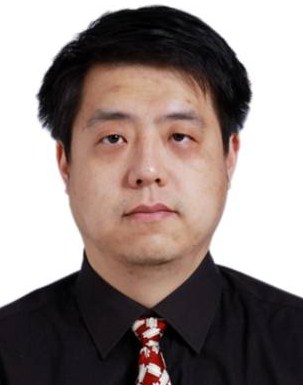}}]{Bin Wu} received his BS degree in automation and MS degree in computer science from the Ocean University of China in 2003 and 2006, respectively. He received his Ph.D. degree in information security from the Graduate University of Chinese Academy of Sciences in 2010. Now, he is an associate professor in State Key Laboratory of Information Security, Institute of Information Engineering, Chinese Academy of Sciences. His research interests include network security, covert communication, and blockchain.
\end{IEEEbiography}

\begin{IEEEbiography}[{\includegraphics[width=1in,height=1.25in,clip,keepaspectratio]{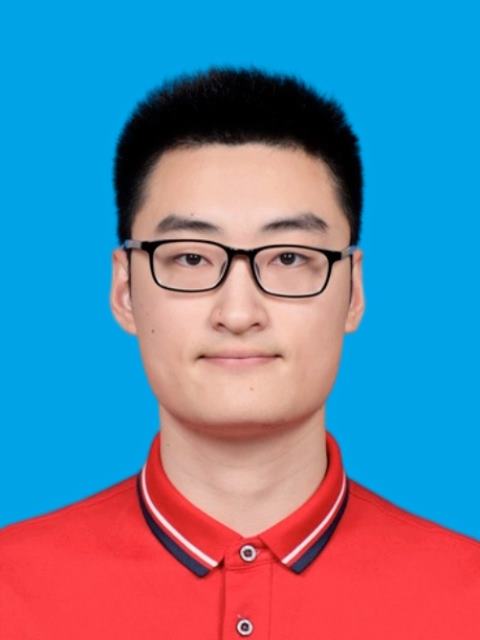}}]{Botao Hou} received the B.Sci. degree in Information Security from Shandong University, Shandong, China, in 2018. He is currently pursuing the PHD’s degree with the Department of Cyber Security, Institute of Information Engineering, CAS, Beijing, China. His research interests include blockchain applications and traffic analysis.
\end{IEEEbiography}

\end{document}